\newtheorem{definition}{Definition}
\newtheorem{theorem}[definition]{Theorem}
\newtheorem{lemma}[definition]{Lemma}
\newtheorem{remark}[definition]{Remark}
\newtheorem{proposition}[definition]{Proposition}
\renewcommand{\tilde}[1]{\widetilde{#1}}
\title{Further improvements on the Feng-Rao bound for dual codes}
\author[1]{Olav Geil\thanks{olav@math.aau.dk}}
\author[1,2]{Stefano Martin\thanks{stefano@math.aau.dk}}
\affil[1]{Department of Mathematical Sciences, Aalborg University}
\affil[2]{Engineering Software Institute, East China Normal University}
\begin{document}
\maketitle
\begin{abstract}
Salazar, Dunn and Graham in~\cite{salazar} presented an improved Feng-Rao
bound for the minimum distance of dual codes. In this work we
take the improvement a step further. Both the original bound by Salazar et.\ al., as well as our
improvement are lifted so that they deal with generalized Hamming
weights. We also demonstrate the advantage of working with one-way
well-behaving pairs rather than weakly well-behaving or well-behaving pairs.\\

\noindent \textbf{Keywords:} 
Advisory bound, affine
variety code,  Feng-Rao bound, generalized Hamming weight, 
minimum distance, well-behaving pair.\\

\noindent \textbf{MSC:} 94B65, 94B27, 94B05.
\end{abstract}

\section{Introduction}\label{secone}
The celebrated Feng-Rao bound for the minimum distance of dual codes
\cite{FR1,FR2} was originally presented in a language close to that of
affine variety codes \cite{lax}. A more general result was derived by
formulating the bound at the level of general linear codes
\cite{early,miura1,MM,geithom}. 
Among the general linear code
formulations the weakest version uses one basis for ${\mathbb{F}}_q^n$
and the concept of {\textit{well-behaving pairs}} (WB). The stronger versions use
two or even three bases and the concept of {\textit{weakly well-behaving}} (WWB)
or even {\textit{one-way well-behaving}} (OWB). The strong linear code formulation is the most general
of all versions of the Feng-Rao bound in the sense that all
other formulations, including the order bound~\cite{handbook}, can be viewed as
corollaries to it.\\
 
In~\cite{salazar} Salazar, Dunn and Graham presented a
clever improvement to the Feng-Rao bound for the minimum distance of
dual codes which they name {\textit{the advisory bound}}~\cite[Def.\ 40]{salazar}. Their exposition uses a language close to that of Feng and
Rao's original papers. In the present paper we start by giving a
general linear code enhancement of their bound and we lift it to deal
with generalized Hamming weights improving upon the usual Feng-Rao
bound for generalized Hamming weights of dual codes
\cite{heijnenpellikaan,geithom}. 
We remind the reader
that generalized Hamming weights among other things are relevant for the analysis of
wiretap channels of type II \cite{wei,luo} and secret sharing
schemes based on error correcting codes \cite{kurihara}. Our proof
demonstrates that the advisory bound is a consequence of a lemma
from which further improvements can be derived. These improvements are
investigated in detail and are formulated in a separate bound. The new 
bound is then lifted to deal with generalized Hamming weights. Our
exposition involves as a main ingredient a relaxation of the concept of OWB.\\
The paper~\cite{salazar} describes two families of affine variety codes
for which the advisory bound is sometimes strictly better than the Feng-Rao
bound. The first family~\cite[Sec.\ 3.1]{salazar} is related to a
curve over ${\mathbb{F}}_8$. The second family~\cite[Sec.\
3.2]{salazar} relates to a surface over ${\mathbb{F}}_4$. In
Section~\ref{secexamples} we shall give a thorough treatment of the
curve from~\cite[Sec.\ 3.1]{salazar} and a related curve over
${\mathbb{F}}_{27}$. As it shall be demonstrated for these curves
sometimes the new bound produces much better results than the advisory
bound. Also it is demonstrated for the first time in the literature
that the Feng-Rao bound equipped with OWB can sometimes be much better
than the same bound equipped with WWB. We do not treat the surface from~\cite[Sec.\
3.2]{salazar} in the present paper. This is due to the fact that it is
more natural to treat the corresponding quotient ring as an order
domain with weights in ${\mathbb{N}}_0^2$ \cite{GP,AG}. Doing so,
one finds much better code parameters by applying the usual Feng-Rao
bound than what was produced by the advisory bound in~\cite[Sec.\
3.2]{salazar}. It is beyond the scope of the present paper to give the details.
\section{Enhancements of the advisory bound}\label{sectwo}
To explain better what is the essence of Salazar,
Dunn, and Graham's method, below we explain it at the level of
general linear codes. We also extend their method to deal with
generalized Hamming weights.\\

Let $n$ be a positive integer and $q$ a prime power. Throughout this and the following section we consider a fixed ordered triple
$({\mathcal{U}},{\mathcal{V}},{\mathcal{W}})$ where
${\mathcal{U}}=\{\vec{u}_1, \ldots ,
\vec{u}_n\}$, ${\mathcal{V}}=\{\vec{v}_1, \ldots , \vec{v}_n\}$, and
${\mathcal{W}}=\{\vec{w}_1, \ldots , \vec{w}_n\}$ are three (possibly
different)  bases for
${\mathbb{F}}_q^n$ as a vector space over ${\mathbb{F}}_q$. By ${\mathcal{I}}$ we
shall always mean the set $\{1, \ldots , n\}$.

\begin{definition}\label{def1}
Let the 
function $\bar{\rho}_{\mathcal{W}}: {\mathbb{F}}_q^n \rightarrow \{0, 1, \ldots ,
n\}$ be given as follows. For $\vec{c} \neq \vec{0}$ we let $\bar{\rho}_{\mathcal{W}}(\vec{c})=i$ if $\vec{c} \in
{\mbox{Span}}\{\vec{w}_1, \ldots , \vec{w}_i\}  \backslash
{\mbox{Span}} \{\vec{w}_1, \ldots , \vec{w}_{i-1}\}$. Here, we used
the notion ${\mbox{Span}}\,  \emptyset = \{\vec{0}\}$. Finally, we let 
$\bar{\rho}_{\mathcal{W}}(\vec{0})=0$.
\end{definition}
The following two concepts play a crucial role in our exposition.
\begin{definition}\label{defcomp}
The component wise product of two vectors $\vec{u}$ and $\vec{v}$ in
${\mathbb{F}}_q^n$ is defined by $(u_1, \ldots , u_n)\ast (v_1,
\ldots, v_n)=(u_1v_1, \ldots , u_nv_n)$.
\end{definition}
\begin{definition}\label{def3}
Let an ordered triple of bases $({\mathcal{U}},{\mathcal{V}},{\mathcal{W}})$ be given.
We define $m : {\mathbb{F}}_q^n \backslash \{ \vec{0}\} \rightarrow
{\mathcal{I}}$ by $m(\vec{c})=l$ if $l$ is the smallest number in
${\mathcal{I}}$ for which 
$\vec{c} \cdot \vec{w}_l \neq 0$.
\end{definition}
We start by stating the Feng-Rao bound for the minimum distance
of dual codes.
\begin{definition}\label{defweb}
Let $({\mathcal{U}},{\mathcal{V}},{\mathcal{W}})$ and ${\mathcal{I}}$ be as
above.\\
An ordered pair $(i,j)\in {\mathcal{I}}\times {\mathcal{I}}$ is said
to be well-behaving (WB) if
$\bar{\rho}_{\mathcal{W}}(\vec{u}_{i^\prime} \ast \vec{v}_{j^\prime})
<\bar{\rho}_{\mathcal{W}}(\vec{u}_{i} \ast \vec{v}_{j})$ holds for all
$i^\prime \leq i$ and $j^\prime \leq j$ with $(i^\prime
,j^\prime)\neq(i,j)$.\\
Less restrictive $(i,j)\in {\mathcal{I}}\times {\mathcal{I}}$ is said
to be weakly well-behaving (WWB) if
$\bar{\rho}_{\mathcal{W}}(\vec{u}_{i^\prime} \ast \vec{v}_{j})
<\bar{\rho}_{\mathcal{W}}(\vec{u}_{i} \ast \vec{v}_{j})$ and 
$\bar{\rho}_{\mathcal{W}}(\vec{u}_{i} \ast \vec{v}_{j^\prime})
<\bar{\rho}_{\mathcal{W}}(\vec{u}_{i} \ast \vec{v}_{j})$ hold for all
$i^\prime < i$ and $j^\prime < j$.\\
Even less restrictive $(i,j)\in {\mathcal{I}}\times {\mathcal{I}}$ is said
to be one-way well-behaving (OWB) if
$\bar{\rho}_{\mathcal{W}}(\vec{u}_{i^\prime} \ast \vec{v}_{j})
<\bar{\rho}_{\mathcal{W}}(\vec{u}_{i} \ast \vec{v}_{j})$ 
holds for all
$i^\prime < i$.
\end{definition}
The usual Feng-Rao bound for the minimum distance of dual codes reads.
\begin{theorem}
For $\vec{c} \in {\mathbb{F}}_q^n\backslash \{ \vec{0}\}$ write
$l=m(\vec{c})$. The Hamming weight of $\vec{c}$ satisfies
\begin{eqnarray}
w_H(\vec{c})&\geq& \# \{ (i,j) \in {\mathcal{I}} \times {\mathcal{I}}
\mid \bar{\rho}_{\mathcal{W}}(\vec{u}_i \ast \vec{v}_j)=l {\mbox{ and
  }} (i,j) {\mbox{ is OWB}}\} \label{eqfrowb}\\
&\geq& \# \{ (i,j) \in {\mathcal{I}} \times {\mathcal{I}}
\mid \bar{\rho}_{\mathcal{W}}(\vec{u}_i \ast \vec{v}_j)=l {\mbox{ and
  }} (i,j) {\mbox{ is WWB}}\} \label{eqfrwwb}\\
&\geq& \# \{ (i,j) \in {\mathcal{I}} \times {\mathcal{I}}
\mid \bar{\rho}_{\mathcal{W}}(\vec{u}_i \ast \vec{v}_j)=l {\mbox{ and
  }} (i,j) {\mbox{ is WB}}\}. \label{eqfrwb}
\end{eqnarray}
\end{theorem}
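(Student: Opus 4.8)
The plan is to prove the chain of inequalities by first establishing the rightmost two inequalities, which are essentially trivial, and then concentrating on \eqref{eqfrowb}, which is the actual content. For the trivial part, observe that by Definition~\ref{defweb} every WB pair is WWB (taking $i'=i$ or $j'=j$ in the WB condition recovers the two WWB conditions, since $(i',j')\neq(i,j)$ is automatic when exactly one coordinate is shrunk) and every WWB pair is OWB (the OWB condition is literally the first of the two WWB conditions). Hence the three sets in \eqref{eqfrowb}--\eqref{eqfrwb} are nested by inclusion, and the corresponding cardinalities decrease. So it remains only to prove the first inequality.

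For \eqref{eqfrowb} itself, I would fix $\vec{c}\in\mathbb{F}_q^n\setminus\{\vec0\}$, set $l=m(\vec{c})$, and let
$$
N=\#\{(i,j)\in\mathcal{I}\times\mathcal{I}\mid \bar{\rho}_{\mathcal{W}}(\vec{u}_i\ast\vec{v}_j)=l \text{ and }(i,j)\text{ is OWB}\}.
$$
The strategy is the classical Feng--Rao ``syndrome matrix has large rank'' argument, adapted to the OWB setting. Consider the bilinear form $S(\vec{a},\vec{b})=\vec{c}\cdot(\vec{a}\ast\vec{b})$ on $\mathbb{F}_q^n\times\mathbb{F}_q^n$; note $S(\vec{u}_i,\vec{v}_j)=\vec{c}\cdot(\vec{u}_i\ast\vec{v}_j)$, and since $\vec{u}_i\ast\vec{v}_j$ lies in $\mathrm{Span}\{\vec w_1,\dots,\vec w_k\}$ with $k=\bar{\rho}_{\mathcal{W}}(\vec{u}_i\ast\vec{v}_j)$, the definition of $m$ forces $S(\vec{u}_i,\vec{v}_j)=0$ whenever $\bar{\rho}_{\mathcal{W}}(\vec{u}_i\ast\vec{v}_j)<l$. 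The key step is then to select, for the $N$ relevant pairs, an ordering $(i_1,j_1),\dots,(i_N,j_N)$ in which the first coordinates $i_1<i_2<\cdots<i_N$ are strictly increasing; this is possible precisely because each relevant pair is OWB, which controls the behaviour of $\bar{\rho}_{\mathcal{W}}$ under \emph{decreasing the first index only} and thereby guarantees that two distinct relevant pairs cannot share the same first index. With such an ordering one shows that the $N\times N$ matrix $M=\bigl(S(\vec{u}_{i_a},\vec{v}_{j_b})\bigr)_{a,b}$ is lower (or upper) triangular with nonzero diagonal: the diagonal entries $S(\vec{u}_{i_a},\vec{v}_{j_a})$ are nonzero because $\bar{\rho}_{\mathcal{W}}(\vec{u}_{i_a}\ast\vec{v}_{j_a})=l=m(\vec{c})$ means $\vec{c}\cdot(\vec{u}_{i_a}\ast\vec{v}_{j_a})\neq0$ (the component along $\vec w_l$ survives), while the vanishing of the appropriate off-diagonal block follows by combining OWB with the $\bar{\rho}_{\mathcal{W}}<l$ vanishing observation above.

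Finally, $\mathrm{rank}(M)=N$ implies $w_H(\vec{c})\geq N$ by the standard linear-algebra bound: writing $\mathrm{Supp}(\vec{c})=\{p_1,\dots,p_w\}$ with $w=w_H(\vec{c})$, each entry $S(\vec{a},\vec{b})=\sum_{t=1}^{w}c_{p_t}a_{p_t}b_{p_t}$, so $M$ factors as a product of a $w\times N$ matrix (rows indexed by the support, built from the $\vec{u}_{i_a}$ restricted to $\mathrm{Supp}(\vec{c})$ scaled by $c_{p_t}$) and an $N\times w$ matrix (built from the $\vec{v}_{j_b}$ restricted to $\mathrm{Supp}(\vec{c})$), whence $N=\mathrm{rank}(M)\leq w$. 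The main obstacle is the middle step: verifying carefully that OWB is exactly the right hypothesis to both (a) force distinct relevant pairs to have distinct first coordinates, so the triangular ordering exists, and (b) make the requisite off-diagonal entries vanish. One must check that for $a<b$ (so $i_a<i_b$) the entry $S(\vec{u}_{i_a},\vec{v}_{j_b})$ vanishes: here $\bar{\rho}_{\mathcal{W}}(\vec{u}_{i_b}\ast\vec{v}_{j_b})=l$ together with OWB of $(i_b,j_b)$ and $i_a<i_b$ gives $\bar{\rho}_{\mathcal{W}}(\vec{u}_{i_a}\ast\vec{v}_{j_b})<l$, hence $S(\vec{u}_{i_a},\vec{v}_{j_b})=0$, which is precisely the triangular pattern needed. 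Getting the directions of the inequalities and the orientation of the triangle consistent is the only delicate bookkeeping; everything else is routine.
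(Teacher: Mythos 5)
Your reduction of \eqref{eqfrwwb} and \eqref{eqfrwb} to \eqref{eqfrowb} via the inclusions WB $\Rightarrow$ WWB $\Rightarrow$ OWB is fine, and so are the two outer ingredients of your main argument (vanishing of $\vec{c}\cdot(\vec{u}_i\ast\vec{v}_j)$ when $\bar{\rho}_{\mathcal{W}}(\vec{u}_i\ast\vec{v}_j)<l$, non-vanishing when it equals $l$, and the factorization giving $\mathrm{rank}\,M\leq w_H(\vec{c})$). The genuine gap is the pivot claim that OWB ``guarantees that two distinct relevant pairs cannot share the same first index.'' It is exactly the other way around: Definition~\ref{defweb} constrains $\bar{\rho}_{\mathcal{W}}(\vec{u}_{i'}\ast\vec{v}_j)$ only for $i'<i$ with $j$ held fixed, so two OWB pairs with $\bar{\rho}_{\mathcal{W}}$-value $l$ cannot share a \emph{second} index (if $(i_1,j)$ and $(i_2,j)$ with $i_1<i_2$ both had value $l$, OWB of $(i_2,j)$ would force $\bar{\rho}_{\mathcal{W}}(\vec{u}_{i_1}\ast\vec{v}_j)<l$), but nothing prevents them from sharing the \emph{first} index: every pair of the form $(1,j)$ is vacuously OWB, so two different $j$'s with $\bar{\rho}_{\mathcal{W}}(\vec{u}_1\ast\vec{v}_j)=l$ already give two counted pairs with equal first coordinate. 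In that situation your matrix $M$ has two identical rows, its rank is strictly below $N$, and the argument collapses; moreover the shortfall is not an artifact of the method. For instance with $n=2$, $q=2$, $\vec{w}_1=\vec{u}_1=\vec{v}_1=(1,0)$, $\vec{w}_2=\vec{u}_2=(0,1)$, $\vec{v}_2=(1,1)$ and $\vec{c}=(1,0)$ one has $m(\vec{c})=1$, both $(1,1)$ and $(1,2)$ are OWB with value $1$, yet $w_H(\vec{c})=1$. So what your triangularization can actually certify is the number of \emph{distinct first indices} $i$ for which some $j$ makes $(i,j)$ OWB with $\bar{\rho}_{\mathcal{W}}(\vec{u}_i\ast\vec{v}_j)=l$; the pair count and this index count coincide for WWB and WB (the second half of the WWB condition rules out repeated first indices, which is why the classical pair-count formulation is safe in \eqref{eqfrwwb} and \eqref{eqfrwb}), but the one-sidedness of OWB is precisely what your ``precisely because each relevant pair is OWB'' step ignores.

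For comparison, the paper offers no standalone proof of this theorem: its mechanism is Lemma~\ref{lemlommelaerke}, from which Theorem~\ref{thehere} follows by taking linear combinations $\sum_r\alpha_r\vec{u}_{i_r}$ over an index set $\mathcal{I}'$ and pairing the leading index with its associated $j$; the OWB Feng--Rao bound is then the special case $\mathcal{I}'=\{i\mid\exists j \mbox{ with } (i,j) \mbox{ OWB and } \bar{\rho}_{\mathcal{W}}(\vec{u}_i\ast\vec{v}_j)=l\}$, i.e.\ again the distinct-first-index count (and in all of the paper's examples the OWB pairs counted do have pairwise distinct first indices, so the two readings agree there). To repair your write-up, either restrict the count to pairs with pairwise distinct first indices --- then your ordering $i_1<\cdots<i_N$ exists and the rest of your argument works verbatim --- or argue as the paper does: show that any nonzero combination $\sum_{r\leq s}\alpha_r\vec{u}_{i_r}$ with $\alpha_s\neq 0$, multiplied componentwise by the $\vec{v}_j$ attached to $i_s$, has $\bar{\rho}_{\mathcal{W}}$-value $l$ and hence nonzero inner product with $\vec{c}$, and then invoke Lemma~\ref{lemlommelaerke}. (A cosmetic point: in your final factorization the product should be of an $N\times w$ matrix by a $w\times N$ matrix to yield the $N\times N$ matrix $M$.)
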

From~\cite[Ex.\ 2.6]{MM} and \cite[Sec.\ 3.1]{salazar} we have
examples where (\ref{eqfrwwb}) are stronger than
(\ref{eqfrwb}). Section~\ref{secexamples} demonstrates that also (\ref{eqfrowb}) can be stronger than (\ref{eqfrwwb}). This
fact was not known before.\\
Although~\cite{salazar} considered only WB and WWB we shall state our
enhancement of the advisory bound using OWB. Doing so we get the
strongest possible version which in addition requires the minimal
number of calculations.
\begin{definition}\label{def2}
Let $({\mathcal{U}},{\mathcal{V}},{\mathcal{W}})$ and ${\mathcal{I}}$ be as
above. 
Consider ${\mathcal{I}}^\prime=\{i_1, \ldots , i_s\}
\subseteq {\mathcal{I}}$ with $i_a \neq i_b$ for $a \neq b$. An ordered pair $(i,j) \subseteq
{\mathcal{I}}^\prime \times {\mathcal{I}}$ is said to be one-way well-behaving (OWB) with
respect to ${\mathcal{I}}^\prime$ if $\bar{\rho}_{\mathcal{W}}(\vec{u}_{i^\prime} \ast
\vec{v}_j) < \bar{\rho}_{\mathcal{W}}(\vec{u}_{i} \ast
\vec{v}_j)$ holds for all $i^\prime \in {\mathcal{I}}^\prime$ with $i^\prime <
i$. \\
We say that ${\mathcal{I}}^\prime$ has the $\mu$-property with respect
to  $l$
if for all $i \in {\mathcal{I}}^\prime$ there exists a  $j \in {\mathcal{I}}$ such that
\begin{enumerate}
\item $(i,j)$ is OWB with respect to ${\mathcal{I}}^\prime$,
\item $\bar{\rho}_{\mathcal{W}}(\vec{u}_i \ast \vec{v}_j)=l$.
\end{enumerate}
\end{definition}
The following theorem is an enhancement of the advisory bound
\cite[Th.\ 48]{salazar}.  
\begin{theorem}\label{thehere}
Let $\vec{c} \in {\mathbb{F}}_q^n \backslash \{ \vec{0}\}$. We have
\begin{eqnarray}
w_H(\vec{c}) & \geq & \max \{\# {\mathcal{I}}^\prime \mid {\mathcal{I}}^\prime \subseteq {\mathcal{I}},
{\mathcal{I}}^\prime {\mbox{ \ has the $\mu$-property with respect to \ }} m(\vec{c}) \}. \nonumber 
\end{eqnarray}
\end{theorem}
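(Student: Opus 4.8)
The plan is to reduce the statement to the usual Feng-Rao bound \eqref{eqfrowb} by showing that any $\mathcal{I}^\prime$ with the $\mu$-property with respect to $l=m(\vec{c})$ gives rise to $\#\mathcal{I}^\prime$ linearly independent constraints on the coordinates of $\vec{c}$. Fix such an $\mathcal{I}^\prime=\{i_1,\ldots,i_s\}$ and, for each $i\in\mathcal{I}^\prime$, fix a $j(i)\in\mathcal{I}$ witnessing the $\mu$-property, so $(i,j(i))$ is OWB with respect to $\mathcal{I}^\prime$ and $\bar{\rho}_{\mathcal{W}}(\vec{u}_i\ast\vec{v}_{j(i)})=l$. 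The key linear-algebra fact underlying the Feng-Rao machinery is the following: if $y_1,\ldots,y_t$ are vectors in $\mathbb{F}_q^n$ and $\vec{c}$ is a nonzero vector such that the ``syndrome matrix'' $S=\bigl(\vec{c}\cdot(y_a\ast y_b)\bigr)_{a,b}$ contains, after suitable permutation of rows and columns, a triangular submatrix with nonzero diagonal of size $s$, then $w_H(\vec{c})\ge s$. This is because $\vec{c}\cdot(y_a\ast y_b)=\sum_{k}c_k (y_a)_k (y_b)_k$ is a bilinear form whose rank is at most $w_H(\vec{c})$ (only the $w_H(\vec{c})$ nonzero coordinates of $\vec{c}$ contribute), and a triangular submatrix with nonzero diagonal has rank exactly $s$.

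Concretely, I would consider the $s\times n$ matrix whose rows are indexed by $i\in\mathcal{I}^\prime$ and whose $i$-th row is the vector $\bigl(\vec{c}\cdot(\vec{u}_i\ast\vec{v}_{j(i)}\ast \text{(expansion in }\mathcal{W})\bigr)$ — more precisely, write $\vec{u}_i\ast\vec{v}_{j(i)}$ in the basis $\mathcal{W}$; its leading term is $\vec{w}_l$ (with nonzero coefficient) by the condition $\bar{\rho}_{\mathcal{W}}(\vec{u}_i\ast\vec{v}_{j(i)})=l$. Now form the $s\times s$ matrix $M$ with entries $M_{i,i^\prime}=\vec{c}\cdot(\vec{u}_{i^\prime}\ast\vec{v}_{j(i)})$ for $i,i^\prime\in\mathcal{I}^\prime$. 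Order $\mathcal{I}^\prime$ increasingly as $i_1<\cdots<i_s$. For the diagonal entry $M_{i_a,i_a}=\vec{c}\cdot(\vec{u}_{i_a}\ast\vec{v}_{j(i_a)})$: since $\bar{\rho}_{\mathcal{W}}(\vec{u}_{i_a}\ast\vec{v}_{j(i_a)})=l$ and $l=m(\vec{c})$ is the smallest index with $\vec{c}\cdot\vec{w}_l\ne 0$, expanding $\vec{u}_{i_a}\ast\vec{v}_{j(i_a)}=\sum_{k\le l}\alpha_k\vec{w}_k$ with $\alpha_l\ne0$ gives $\vec{c}\cdot(\vec{u}_{i_a}\ast\vec{v}_{j(i_a)})=\alpha_l(\vec{c}\cdot\vec{w}_l)\ne 0$. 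For the off-diagonal entries with $i_b>i_a$ (i.e.\ below the diagonal if rows are ordered by $a$ and columns by the index $i^\prime$): the OWB condition for $(i_a,j(i_a))$ with respect to $\mathcal{I}^\prime$ says $\bar{\rho}_{\mathcal{W}}(\vec{u}_{i_b}\ast\vec{v}_{j(i_a)})<\bar{\rho}_{\mathcal{W}}(\vec{u}_{i_a}\ast\vec{v}_{j(i_a)})=l$ whenever $i_b<i_a$; hence $\vec{u}_{i_b}\ast\vec{v}_{j(i_a)}\in\mathrm{Span}\{\vec{w}_1,\ldots,\vec{w}_{l-1}\}$, so $\vec{c}\cdot(\vec{u}_{i_b}\ast\vec{v}_{j(i_a)})=0$ since $\vec{c}\cdot\vec{w}_k=0$ for all $k<l$. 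Thus $M$ is lower-triangular (in the $i_1<\cdots<i_s$ order, with rows indexed by the witness and columns by $i^\prime$) with nonzero diagonal, hence $\mathrm{rank}(M)=s$.

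Finally, observe that $M_{i,i^\prime}=\vec{c}\cdot(\vec{u}_{i^\prime}\ast\vec{v}_{j(i)})=\sum_{k=1}^n c_k (u_{i^\prime})_k (v_{j(i)})_k$, so $M = A\,\mathrm{diag}(c_1,\ldots,c_n)\,B^{\mathsf{T}}$ where $A$ has rows $\vec{v}_{j(i)}$, $i\in\mathcal{I}^\prime$, and $B$ has rows $\vec{u}_{i^\prime}$, $i^\prime\in\mathcal{I}^\prime$. Therefore $s=\mathrm{rank}(M)\le \mathrm{rank}(\mathrm{diag}(c))=\#\{k: c_k\ne0\}=w_H(\vec{c})$. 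Taking the maximum over all valid $\mathcal{I}^\prime$ yields the claim. The main obstacle — the point to be careful about — is checking that the OWB-with-respect-to-$\mathcal{I}^\prime$ condition is exactly strong enough to force the correct triangular shape while being weak enough that the witnesses $j(i)$ may differ from row to row: unlike the classical WWB/WB cases one cannot use a single column index, and one must verify that permuting the columns of $M$ to read off a nonzero diagonal (matching column $i^\prime=i_a$ with row $i=i_a$) is legitimate, which it is because the diagonal entries are precisely the $M_{i_a,i_a}$ computed above and the entries $M_{i_a,i_b}$ with $i_b<i_a$ vanish.
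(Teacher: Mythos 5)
Your proof is correct, but it takes a different route from the paper. You run the classical syndrome-matrix argument: build the $s\times s$ matrix $M_{a,b}=\vec{c}\cdot(\vec{u}_{i_b}\ast\vec{v}_{j(i_a)})$, use $\bar{\rho}_{\mathcal{W}}(\vec{u}_{i_a}\ast\vec{v}_{j(i_a)})=l=m(\vec{c})$ to get nonzero diagonal entries, use OWB-with-respect-to-${\mathcal{I}}^\prime$ to kill the entries with $i_b<i_a$, and then bound $\mathrm{rank}(M)\le w_H(\vec{c})$ via the factorization $M=A\,\mathrm{diag}(c_1,\ldots,c_n)\,B^{\mathsf{T}}$. (You call the matrix lower-triangular at one point and describe the vanishing entries with the inequality briefly reversed, but your final summary states the correct pattern, so this is only a labelling slip; it is also fine that the witnesses $j(i)$ may repeat, since repeated rows of $A$ do not affect the rank bound.) The paper instead proves the statement as a special case of Theorem~\ref{thebelow}: it takes an arbitrary nonzero combination $\sum_{r\le s}\alpha_r\vec{u}_{i_r}$ with $\alpha_s\neq 0$, uses the same OWB/$\bar{\rho}$ facts to show $\vec{c}\cdot\big(\big(\sum_r\alpha_r\vec{u}_{i_r}\big)\ast\vec{v}_j\big)\neq 0$ for the witness $j$ of $i_s$, and then invokes Lemma~\ref{lemlommelaerke}, whose Gaussian-elimination-on-the-support proof is exactly the dual packaging of your rank bound. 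The combinatorial core (triangularity forced by OWB plus $\bar{\rho}=m(\vec{c})$) is identical; the difference is what the packaging buys. Your version is self-contained and immediately transparent for a single codeword, whereas the paper's lemma is stated for an arbitrary subspace $D$, so the same argument yields the generalized-Hamming-weight bound of Theorem~\ref{thebelow} and is reused for the later refinements, which your single-codeword matrix formulation would need to be re-derived for (e.g.\ by letting the column space of $\mathrm{diag}(c)$ range over $\mathrm{Supp}\,D$).
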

\begin{proof}
The theorem  is a special case of
Theorem~\ref{thebelow} below.
\end{proof}

\begin{remark}\label{remstart}
Consider the code $C(s)=\{\vec{c} \in {\mathbb{F}}_q^n \mid \vec{c}
\cdot \vec{w}_1=\cdots = \vec{c} \cdot \vec{w}_s=0\}$. To estimate the
minimum distance of $C(s)$ we calculate the minimal value from
Theorem~\ref{thehere} when $m(\vec{c})$ runs through all possible
numbers in $\{s+1, \ldots , n\}$. As an alternative to $C(s)$ we get
an improved code construction
by using as parity checks only those $\vec{w}_l$, $l \in {\mathcal{I}}$ for which
Theorem~\ref{thehere} with $m(\vec{c})=l$ produces values less than
$\delta$. The minimum distance of this code, which we denote by
$\tilde{C}_{adv}(\delta)$, is at least $\delta$.
\end{remark}
We next consider the generalized Hamming weights.
\begin{definition}
Let $C \subseteq {\mathbb{F}}_q^n$ be a code of dimension $k$. For
$t=1, \ldots , k$ the $t$th generalized Hamming weight is
$$d_t(C)=\min \{ \# {\mbox{Supp}} \, D \mid D {\mbox{ \ is a subspace of
    \ }} C {\mbox{ \ of dimension \ }}
t\}.$$
\end{definition}
Clearly, $d_1$ is nothing but the usual minimum
distance. To estimate generalized Hamming weights we first need to extend
Definition~\ref{def2} and Definition~\ref{def3}.
\begin{definition}\label{def4}
Consider $1 \leq l_1 < \cdots < l_t \leq n$ and let ${\mathcal{I}}^\prime
\subseteq {\mathcal{I}}$. We will say that ${\mathcal{I}}^\prime$ has the $\mu$-property with
respect to $\{l_1, \ldots , l_t\}$ if for all $i \in {\mathcal{I}}^\prime$ there
exists a $j\in {\mathcal{I}}$ such that
\begin{itemize}
\item $(i,j)$ is OWB with respect to ${\mathcal{I}}^\prime$,\item $\bar{\rho}_{\mathcal{W}}(\vec{u}_i \ast \vec{v}_j)\in \{l_1, \ldots , l_t\}$.
\end{itemize}
\end{definition}

\begin{definition}\label{def5}
Let $D \subseteq {\mathbb{F}}_q^n$ be a subspace. We
define
$$m(D)=\big\{ m(\vec{c}) \mid \vec{c} \in D \backslash \{ \vec{0}\}\big\}.$$
\end{definition}
The following proposition is easily proved.
\begin{proposition}\label{pro1}
If $D \subseteq {\mathbb{F}}_q^n$ is a subspace of dimension $t$ then
$\# m(D) =t$.
\end{proposition}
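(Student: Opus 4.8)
The plan is to prove $\# m(D) = t$ by establishing the two inequalities $\# m(D) \leq t$ and $\# m(D) \geq t$ separately, the first being the easy direction and the second requiring a dimension-counting argument built around the map $m$.

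First I would handle $\# m(D) \leq t$. Fix any $l \in m(D)$ and pick a witness $\vec{c}_l \in D \setminus \{\vec{0}\}$ with $m(\vec{c}_l) = l$. I claim the witnesses $\{\vec{c}_l \mid l \in m(D)\}$ are linearly independent in $D$, which immediately forces $\# m(D) \leq \dim D = t$. Linear independence follows from a standard ``leading position'' argument applied to the dual pairing: if $l_1 < l_2 < \cdots < l_r$ are the distinct elements of $m(D)$, then for each $a$ the vector $\vec{c}_{l_a}$ satisfies $\vec{c}_{l_a} \cdot \vec{w}_{l_a} \neq 0$ while $\vec{c}_{l_b} \cdot \vec{w}_{l_a} = 0$ for all $b < a$ (since $m(\vec{c}_{l_b}) = l_b < l_a$ means the first nonzero pairing of $\vec{c}_{l_b}$ against the $\vec{w}$'s occurs at position $l_b$, hence $\vec{c}_{l_b} \cdot \vec{w}_{l_a'} = 0$ for all $l_a' < l_b$ — wait, I need to be careful about the direction). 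The correct observation is: pairing $\vec{c}_{l_b}$ with $\vec{w}_{l_b}$ gives something nonzero, and any vanishing linear combination $\sum_a \lambda_a \vec{c}_{l_a} = \vec{0}$, when dotted with $\vec{w}_{l_1}$ (the smallest index), kills every term with $m(\vec{c}_{l_a}) > l_1$ and isolates $\lambda_1 (\vec{c}_{l_1} \cdot \vec{w}_{l_1}) = 0$, forcing $\lambda_1 = 0$; induction then clears the rest.

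Second, for $\# m(D) \geq t$, I would argue that the map $\vec{c} \mapsto m(\vec{c})$ cannot take too few values on $D \setminus \{\vec{0}\}$. Suppose $m(D) = \{l_1, \ldots, l_r\}$ with $r < t$. Consider the $\F_q$-linear map $\phi : D \to \F_q^r$ sending $\vec{c} \mapsto (\vec{c} \cdot \vec{w}_{l_1}, \ldots, \vec{c} \cdot \vec{w}_{l_r})$. I claim $\phi$ is injective: if $\vec{c} \in D \setminus \{\vec{0}\}$ has $\phi(\vec{c}) = \vec{0}$, then $m(\vec{c}) \notin \{l_1, \ldots, l_r\}$ because the first index $l$ with $\vec{c} \cdot \vec{w}_l \neq 0$ avoids all the $l_a$; but $m(\vec{c}) \in m(D) = \{l_1, \ldots, l_r\}$ by definition, a contradiction. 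Injectivity of $\phi$ gives $t = \dim D \leq r$, contradicting $r < t$. Hence $\# m(D) \geq t$, and combining with the first part yields $\# m(D) = t$.

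The only delicate point — the ``main obstacle,'' though it is minor — is getting the bookkeeping of the pairing argument in the first inequality exactly right, namely tracking which dot products $\vec{c}_{l_b} \cdot \vec{w}_{l_a}$ are forced to vanish; this is cleanly handled by noting that $m(\vec{c}_{l_b}) = l_b$ means $\vec{c}_{l_b}$ is orthogonal to $\vec{w}_1, \ldots, \vec{w}_{l_b - 1}$, so sweeping the smallest surviving index in a dependency relation always isolates a single coefficient. In fact the second inequality's injective-map argument is self-contained and already establishes $\# m(D) \geq t$ without needing the independence of witnesses, so one could streamline by using $\phi$ (suitably enlarged to all of $m(D)$) to get both bounds at once; I would present it whichever way reads most cleanly, but the two-inequality structure above is the most transparent.
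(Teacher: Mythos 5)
Your proof is correct. Note that the paper itself gives no argument here — it simply states that the proposition "is easily proved" — so there is nothing to compare against; your two-inequality proof (linear independence of witnesses $\vec{c}_l$ with distinct values $m(\vec{c}_l)$, plus injectivity of $\vec{c}\mapsto(\vec{c}\cdot\vec{w}_{l_1},\ldots,\vec{c}\cdot\vec{w}_{l_r})$ on $D$) is a valid and standard way to fill the gap, and both halves are sound, including the bookkeeping that $m(\vec{c})=l$ forces $\vec{c}\cdot\vec{w}_j=0$ for all $j<l$. As you yourself observe, the two directions can be unified: under the vector-space isomorphism $\vec{c}\mapsto(\vec{c}\cdot\vec{w}_1,\ldots,\vec{c}\cdot\vec{w}_n)$ (an isomorphism because $\mathcal{W}$ is a basis, which is also what guarantees $m$ is well defined on nonzero vectors), $m(\vec{c})$ is just the position of the first nonzero coordinate, and the fact that a $t$-dimensional subspace exhibits exactly $t$ such pivot positions is immediate from Gaussian elimination; either presentation is fine.
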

Our enhancement of the advisory bound is based on the
following lemma from which we shall also in the next section derive an
even better bound.
\begin{lemma}\label{lemlommelaerke}
Consider a subspace $D \subseteq {\mathbb{F}}_q^n$. Let $U \subseteq
{\mathbb{F}}_q^n$ be a subspace of dimension $\delta$ such that for
all non-zero words $\vec{u} \in U$ for some $\vec{v}_j \in {\mathcal{V}}$ and
some $\vec{c} \in D$ it holds that $(\vec{u} \ast \vec{v}_j) \cdot
\vec{c} \neq 0$ then $|{\mbox{Supp}} \, D| \geq \delta$. 
\end{lemma}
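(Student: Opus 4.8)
The plan is to produce an injective linear map from $U$ into a coordinate subspace determined by $\mbox{Supp}\, D$, after which a dimension count finishes the argument.

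First I would write $S = \mbox{Supp}\, D \subseteq {\mathcal{I}}$ and $s = \# S$; the claim to be established is then $\delta \le s$. Let $\pi \colon {\mathbb{F}}_q^n \to {\mathbb{F}}_q^{s}$ be the linear projection onto the coordinates indexed by $S$. The crucial observation is that for any $\vec{u} \in {\mathbb{F}}_q^n$, any $\vec{v}_j \in {\mathcal{V}}$ and any $\vec{c} \in D$ we have
$$(\vec{u} \ast \vec{v}_j) \cdot \vec{c} = \sum_{k=1}^{n} u_k (v_j)_k c_k = \sum_{k \in S} u_k (v_j)_k c_k ,$$
since $c_k = 0$ for every $k \notin S$ by the very definition of the support of a set of vectors. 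Hence each such pairing depends on $\vec{u}$ only through $\pi(\vec{u})$.

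Next I would show that the restriction $\pi|_U$ is injective. Suppose $\vec{u} \in U$ satisfies $\pi(\vec{u}) = \vec{0}$, i.e.\ $u_k = 0$ for all $k \in S$. Then the displayed identity gives $(\vec{u} \ast \vec{v}_j)\cdot \vec{c} = 0$ for every $\vec{v}_j \in {\mathcal{V}}$ and every $\vec{c} \in D$; by the hypothesis of the lemma this is possible only if $\vec{u} = \vec{0}$. Therefore $\pi|_U$ has trivial kernel, so $\delta = \dim U \le \dim {\mathbb{F}}_q^{s} = s = \#\, \mbox{Supp}\, D$, which is exactly the assertion. (The degenerate case $D=\{\vec{0}\}$ is covered automatically: then the hypothesis forces $U=\{\vec{0}\}$ and both sides are $0$.)

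There is no genuine obstacle here beyond spotting the right map; the only point that deserves a carefully phrased sentence is why the pairing $(\vec{u}\ast \vec{v}_j)\cdot\vec{c}$ only "sees" the $S$-coordinates of $\vec{u}$. That is precisely what lets one read the contrapositive of the hypothesis as injectivity of $\pi|_U$, after which the conclusion is a plain dimension inequality.
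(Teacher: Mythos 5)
Your proof is correct and is in substance the paper's argument: the key point in both is that $(\vec{u}\ast\vec{v}_j)\cdot\vec{c}$ only sees the coordinates of $\vec{u}$ lying in ${\mbox{Supp}}\,D$, so a nonzero $\vec{u}\in U$ vanishing there would violate the hypothesis. The paper merely packages this as a contradiction, using Gaussian elimination on a basis of $U$ to exhibit such a vector when $\#{\mbox{Supp}}\,D<\delta$, which is exactly the nontrivial kernel element of your projection $\pi|_U$ that your injectivity-plus-dimension-count formulation rules out directly.
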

\begin{proof}
Aiming for a contradiction we assume that the above criteria holds
true, but that $| {\mbox{Supp}}\, D | <\delta$. Without loss of
generality we write ${\mbox{Supp}} \, D =\{ 1, \ldots , g\}$. Clearly
$g \leq \delta -1$. Consider a matrix whose rows constitute a basis
for $U$. After having performed Gaussian elimination we arrive at a
matrix whose last row, say
$\vec{u}^\prime$, starts with $\delta
-1$ zeros. Therefore $\vec{u}^\prime \ast \vec{c} =\vec{0}$ holds for all
$\vec{c} \in D$. On the other hand by assumption for some particular word $\vec{c} \in D$ we have $(\vec{u}^\prime
\ast \vec{v}_j)\cdot \vec{c} \neq 0 \Rightarrow \vec{u}^\prime \ast \vec{c}
\neq \vec{0}$. This is a contradiction. 
\end{proof}
\begin{theorem}\label{thebelow}
Consider a subspace $D \subset {\mathbb{F}}_q^n$. We have
\begin{eqnarray}
\# {\mbox{Supp}}\, D&\geq &\max \{ \# {\mathcal{I}}^\prime \mid {\mathcal{I}}^\prime \subseteq {\mathcal{I}},
{\mathcal{I}}^\prime {\mbox{ \ has the }} \nonumber \\
&&{\mbox{ \ \ \ \ \ \ \ \ \ $\mu$-property with respect to\ }} m(D)\}. \nonumber
\end{eqnarray}
\end{theorem}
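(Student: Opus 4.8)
The plan is to derive Theorem~\ref{thebelow} from Lemma~\ref{lemlommelaerke}. Fix any ${\mathcal{I}}^\prime=\{i_1<\cdots<i_s\}\subseteq{\mathcal{I}}$ having the $\mu$-property with respect to $m(D)$; it suffices to exhibit a subspace $U\subseteq{\mathbb{F}}_q^n$ of dimension $s$ satisfying the hypothesis of Lemma~\ref{lemlommelaerke}, for then $\#{\mbox{Supp}}\,D\geq s$, and taking the maximum over all such ${\mathcal{I}}^\prime$ yields the claim. Note that by Proposition~\ref{pro1} the set $m(D)$ has exactly $\dim D$ elements, so it has the form $\{l_1<\cdots<l_t\}$ required in Definition~\ref{def4}. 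The natural choice is $U:={\mbox{Span}}\{\vec{u}_i\mid i\in{\mathcal{I}}^\prime\}$, which has dimension $s$ since ${\mathcal{U}}$ is a basis for ${\mathbb{F}}_q^n$.

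First I would record the standard duality between $\bar{\rho}_{\mathcal{W}}$ and $m$: if $\vec{x},\vec{c}\in{\mathbb{F}}_q^n\backslash\{\vec{0}\}$ and $l=m(\vec{c})$, then writing $\vec{x}=\sum_{i=1}^{p}b_i\vec{w}_i$ with $p=\bar{\rho}_{\mathcal{W}}(\vec{x})$ and $b_p\neq 0$, and using that $\vec{w}_i\cdot\vec{c}=0$ for $i<l$ while $\vec{w}_l\cdot\vec{c}\neq 0$, one gets $\vec{x}\cdot\vec{c}=0$ when $p<l$ and $\vec{x}\cdot\vec{c}=b_l(\vec{w}_l\cdot\vec{c})\neq 0$ when $p=l$.

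Then, to check the hypothesis of the lemma, I would take an arbitrary nonzero $\vec{u}\in U$ and write $\vec{u}=\sum_{i\in{\mathcal{I}}^\prime}a_i\vec{u}_i$, setting $i^\ast=\max\{i\in{\mathcal{I}}^\prime\mid a_i\neq 0\}$. By the $\mu$-property applied to $i^\ast$ there is a $j\in{\mathcal{I}}$ with $(i^\ast,j)$ OWB with respect to ${\mathcal{I}}^\prime$ and $\bar{\rho}_{\mathcal{W}}(\vec{u}_{i^\ast}\ast\vec{v}_j)=l$ for some $l\in m(D)$; choose $\vec{c}\in D\backslash\{\vec{0}\}$ with $m(\vec{c})=l$. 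Expanding, $(\vec{u}\ast\vec{v}_j)\cdot\vec{c}=\sum_{i\in{\mathcal{I}}^\prime}a_i\,(\vec{u}_i\ast\vec{v}_j)\cdot\vec{c}$. The summands with $i>i^\ast$ vanish because $a_i=0$; the summands with $i\in{\mathcal{I}}^\prime$, $i<i^\ast$, vanish because OWB with respect to ${\mathcal{I}}^\prime$ gives $\bar{\rho}_{\mathcal{W}}(\vec{u}_i\ast\vec{v}_j)<\bar{\rho}_{\mathcal{W}}(\vec{u}_{i^\ast}\ast\vec{v}_j)=l=m(\vec{c})$, whence the duality above forces $(\vec{u}_i\ast\vec{v}_j)\cdot\vec{c}=0$; finally the $i=i^\ast$ term equals $a_{i^\ast}(\vec{u}_{i^\ast}\ast\vec{v}_j)\cdot\vec{c}$, which is nonzero because $a_{i^\ast}\neq 0$ and $\bar{\rho}_{\mathcal{W}}(\vec{u}_{i^\ast}\ast\vec{v}_j)=l=m(\vec{c})$. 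Hence $(\vec{u}\ast\vec{v}_j)\cdot\vec{c}\neq 0$, so $U$ meets the hypothesis of Lemma~\ref{lemlommelaerke} with $\delta=s$, and we conclude $\#{\mbox{Supp}}\,D\geq s$.

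I do not expect a genuine obstacle here; the only delicate point is the choice of the leading index $i^\ast$ and the bookkeeping showing that every other term in the expansion of $(\vec{u}\ast\vec{v}_j)\cdot\vec{c}$ dies — the $i>i^\ast$ terms trivially, and the $i<i^\ast$ terms precisely because OWB is a one-sided downward condition, which is exactly why ``OWB with respect to ${\mathcal{I}}^\prime$'' is the right hypothesis in Definition~\ref{def4} (WWB or WB would also suffice but are unnecessarily strong). Since the argument is uniform in ${\mathcal{I}}^\prime$, it produces the stated maximum as a lower bound for $\#{\mbox{Supp}}\,D$, and Theorem~\ref{thehere} follows as the special case $\dim D=1$.
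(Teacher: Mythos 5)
Your proposal is correct and follows essentially the same route as the paper: both reduce to Lemma~\ref{lemlommelaerke} with $U={\mbox{Span}}\{\vec{u}_i\mid i\in{\mathcal{I}}^\prime\}$, pick the largest index with nonzero coefficient in an arbitrary nonzero $\vec{u}\in U$, and use the $\mu$-property (OWB with respect to ${\mathcal{I}}^\prime$) to force $(\vec{u}\ast\vec{v}_j)\cdot\vec{c}\neq 0$ for a suitable $\vec{c}\in D$ with $m(\vec{c})=l$. The only difference is cosmetic: you expand the pairing termwise and spell out the duality between $\bar{\rho}_{\mathcal{W}}$ and $m$, whereas the paper argues via $\bar{\rho}_{\mathcal{W}}\big(\big(\sum_r\alpha_r\vec{u}_{i_r}\big)\ast\vec{v}_j\big)\in m(D)$ and leaves that step implicit.
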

\begin{proof}
Let ${\mathcal{I}}^\prime=\{ i_1, \ldots , i_\delta\}$, $i_a \neq i_b$ for $a \neq
b$, be a set which has the
$\mu$-property with respect to $m(D)$. Consider $\sum_{r=1}^s
\alpha_r\vec{u}_{i_r}$, $1 \leq s \leq \delta$, $\alpha_r \in
{\mathbb{F}}_q$,  $\alpha_s \neq 0$. By
assumption there exists a $j \in  {\mathcal{I}}$ such that $(i_s,j)$ is OWB with respect
to ${\mathcal{I}}^\prime$ and such that
$\bar{\rho}_{\mathcal{W}}(\vec{u}_{i_s} \ast \vec{v}_j) \in
m(D)$. Therefore, $\bar{\rho}_{\mathcal{
W}}\big(\big(\sum_{r=1}^s
\alpha_r\vec{u}_{i_r}\big)\ast \vec{v}_j\big) \in m(D)$ and for some
$\vec{c} \in D$ it holds that $\big(\sum_{r=1}^s
\alpha_r\vec{u}_{i_r}\big)\ast \vec{v}_j\big)\cdot \vec{c} \neq 0$. The theorem
now follows from Lemma~\ref{lemlommelaerke}.
\end{proof}
\begin{remark}\label{remm}
Let $\{\vec{d}_1, \ldots , \vec{d}_{n-k}\} \subseteq {\mathbb{F}}_q^n$ be a linearly independent
set and consider the code 
$C=\{ \vec{c}\in {\mathbb{F}}_q^n \mid \vec{c} \cdot
\vec{d}_1=\cdots = \vec{c} \cdot
\vec{d}_{n-k}=0\}$. 
Without loss of generality we may assume that
$\bar{\rho}_{\mathcal{W}}(\vec{d}_1)<\cdots <
\bar{\rho}_{\mathcal{W}}(\vec{d}_{n-k})$ holds, say these numbers are
$l_1 <\cdots <l_{n-k}$. It is not hard to prove that
$m(C)={\mathcal{I}}\backslash \{ l_1, \ldots , l_{n-k}\}$.
\end{remark}
Combining Theorem~\ref{thebelow} and Remark~\ref{remm} we get:
\begin{theorem}\label{deter9}
Let $C=\{ \vec{c}\in {\mathbb{F}}_q^n \mid \vec{c} \cdot
\vec{d}_1=\cdots = \vec{c} \cdot
\vec{d}_{n-k}=0\}$, where $\{\vec{d}_1, \ldots , \vec{d}_{n-k} \}$ and $\{l_1,
\ldots , l_{n-k}\}$ are as in Remark~\ref{remm}. 
For $t=1, \ldots ,k$ the $t$th generalized Hamming weight of $C$ satisfies
\begin{eqnarray}
d_t(C)\geq  \min \bigg\{ {\mbox{ \ \hspace{9.9cm}}} \nonumber \\
 \max \big\{ \# {\mathcal{I}}^\prime \mid {\mathcal{I}}^\prime \subseteq {\mathcal{I}},
{\mathcal{I}}^\prime {\mbox{ \ has the $\mu$-property with respect to \ }} \{m_1,
\ldots , m_t\}\big\} \mid {\mbox{\ \hspace{0.1cm}}} \nonumber \\
{\mbox{ \ \hspace{16mm}}} m_1 < \cdots < m_t, m_s \in {\mathcal{I}}\backslash
\{l_1, \ldots , l_{n-k} \} {\mbox{ \  for \ }} s=1, \ldots , t \bigg\}.
\nonumber 
\end{eqnarray}
\end{theorem}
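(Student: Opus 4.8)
The plan is to reduce the statement to Theorem~\ref{thebelow} applied to an arbitrary $t$-dimensional subspace of $C$, after identifying which index tuples can occur as $m(D)$.

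First I would fix $t \in \{1, \ldots , k\}$ and let $D \subseteq C$ be any subspace of dimension $t$. By Proposition~\ref{pro1} the set $m(D)$ has exactly $t$ elements, so I may write $m(D) = \{ m_1, \ldots , m_t\}$ with $m_1 < \cdots < m_t$. Since $D \backslash \{ \vec{0}\} \subseteq C \backslash \{\vec{0}\}$, Definition~\ref{def5} gives $m(D) \subseteq m(C)$, and by Remark~\ref{remm} we have $m(C) = {\mathcal{I}} \backslash \{ l_1, \ldots , l_{n-k}\}$; hence each $m_s$ lies in ${\mathcal{I}} \backslash \{ l_1, \ldots , l_{n-k}\}$. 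Thus $\{ m_1, \ldots , m_t\}$ is one of the admissible index tuples appearing in the minimum on the right-hand side of the claimed inequality.

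Next I would invoke Theorem~\ref{thebelow} for this particular $D$, which yields
$$\# {\mbox{Supp}}\, D \geq \max \{ \# {\mathcal{I}}^\prime \mid {\mathcal{I}}^\prime \subseteq {\mathcal{I}}, \ {\mathcal{I}}^\prime {\mbox{ has the $\mu$-property with respect to }} m(D)\}.$$
Because $m(D) = \{ m_1, \ldots , m_t\}$, the $\mu$-property with respect to $m(D)$ is, by Definition~\ref{def4}, exactly the $\mu$-property with respect to $\{ m_1, \ldots , m_t\}$, so the right-hand side above equals the inner $\max$-term of Theorem~\ref{deter9} for this choice of $m_1 < \cdots < m_t$. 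That term is at least the minimum taken over all admissible choices of $m_1 < \cdots < m_t$, so $\# {\mbox{Supp}}\, D$ is bounded below by that minimum. Since $D$ was an arbitrary subspace of $C$ of dimension $t$, taking the minimum over all such $D$ in the definition of $d_t(C)$ gives the theorem.

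There is essentially no hard step here; the real content has been front-loaded into Lemma~\ref{lemlommelaerke} and Theorem~\ref{thebelow}. The only points requiring care are bookkeeping: checking that $m(D)$ is genuinely a subset of $m(C)$ (immediate from $D \subseteq C$ together with Definition~\ref{def5}), recalling the description $m(C) = {\mathcal{I}} \backslash \{ l_1, \ldots , l_{n-k}\}$ from Remark~\ref{remm}, and using $\# m(D) = t$ so that the elements of $m(D)$ can be matched with an increasing $t$-tuple as Definition~\ref{def4} requires. The one subtlety worth spelling out is that one must \emph{not} optimise the choice of $\{ m_1, \ldots , m_t\}$ for a fixed $D$: the set $m(D)$ is forced by $D$, and it is precisely the outer minimum over admissible tuples that makes the bound hold uniformly over all $t$-dimensional $D \subseteq C$.
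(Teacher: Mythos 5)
Your proof is correct and follows exactly the route the paper intends: the paper itself derives Theorem~\ref{deter9} simply by "combining Theorem~\ref{thebelow} and Remark~\ref{remm}", which is precisely your argument of applying Theorem~\ref{thebelow} to an arbitrary $t$-dimensional subspace $D\subseteq C$, using Proposition~\ref{pro1} and $m(D)\subseteq m(C)={\mathcal{I}}\backslash\{l_1,\ldots,l_{n-k}\}$, and then passing to the outer minimum. Your closing remark that the tuple $\{m_1,\ldots,m_t\}$ is forced by $D$ (hence the need for the minimum over admissible tuples) is exactly the right bookkeeping point.
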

In Section~\ref{secexamples} we illustrate with a couple of examples
that Theorem~\ref{deter9} is operational
even though it does appear 
technical at a first glance.\\
In a straight forward manner one can enhance Theorem~\ref{deter9} to
also deal with relative generalized Hamming weights
(See~\cite{luo,luoetal}). This bound should be compared with the naive
bound, that the relative generalized Hamming weight is always at least
as large as the estimate on the generalized Hamming weight from
Theorem~\ref{deter9}. It should also be compared to the Feng-Rao bound
for relative generalized Hamming weights.
As we have no examples where
the mentioned enhancement of Theorem~\ref{deter9} produces results
which are simultaneously better than the above mentioned two
alternatives and as at the same time the enhancement of
Theorem~\ref{deter9} is rather technical we do not give the details here.
\section{Further improvements}\label{secfurther}
In the following we will strengthen the results from the previous
section. We start by explaining how to improve upon
Theorem~\ref{thehere}. Given $\vec{c} \in {\mathbb{F}}_q^n \backslash
\{ \vec{0} \}$, consider the corresponding number $m(\vec{c})=\min \{l
  \mid \vec{c} \cdot \vec{w}_l \neq 0\}$ and a set
  ${\mathcal{I}}^\prime \subseteq {\mathcal{I}}$ which has the
  $\mu$-property with respect to $m(\vec{c})$. Theorem~\ref{thehere}
  relies on the observation that if for $i \in {\mathcal{I}}^\prime$, 
  $j \in {\mathcal{I}}$ is the corresponding number such that
  $\bar{\rho}_{\mathcal{W}}(\vec{u}_i\ast \vec{v}_j)=m(\vec{c})$ and $(i,j)$ is OWB with respect to
  ${\mathcal{I}}^\prime$ then 
$$\vec{c} \cdot \bigg( \big(\sum_{\begin{array}{c}i^\prime \in {\mathcal{I}}^\prime
    \\i^\prime \leq i
\end{array}}\alpha_{i^\prime}\vec{u}_{i^\prime}\big)\ast
\vec{v}_j\bigg) \neq 0$$
holds whenever $\alpha_{i^\prime} \in
    {\mathbb{F}}_q$, 
$\alpha_i \neq 0$. Note that the above argument uses no
information regarding the status of $\vec{c} \cdot
\vec{w}_{m(\vec{c})+1}, \cdots , \vec{c} \cdot
\vec{w}_{n}$. Indeed, if the only information we have on $\vec{c}$ is
$m(\vec{c})$ then these numbers can take on all possible combinations
of values from ${\mathbb{F}}_q$.
\begin{remark}\label{remms}
Let $C$ be as in Remark~\ref{remm} with
\begin{equation}
\bar{\rho}_{\mathcal{W}}(\vec{d}_1)=l_1 < \cdots <
\bar{\rho}_{\mathcal{W}}(\vec{d}_{n-k})=l_{n-k}.\label{eqved}
\end{equation}
Consider a general codeword $\vec{c} \in C \backslash \{\vec{0}
\}$. If the only thing we know about $\vec{d}_1, \ldots ,
\vec{d}_{n-k}$ is~(\ref{eqved}) then we have no information regarding
$\vec{c} \cdot \vec{w}_{l_1}, \ldots , \vec{c} \cdot
\vec{w}_{l_{n-k}}$. If however, as the other extreme,
we know that $\vec{d}_1=\vec{w}_{l_1}, \ldots , \vec{d}_{n-k}=\vec{w}_{l_{n-k}}$
then we have $\vec{c} \cdot \vec{w}_{l_1}= \cdots = \vec{c} \cdot
\vec{w}_{l_{n-k}}=0$.
\end{remark}
Write $l=m(\vec{c})$ and consider the indexes $l+1, \ldots , l+v\leq
n$. Here, $v$ is some positive integer. For some of the above indexes $x$ we may {\it{a priori}}
know that $\vec{c} \cdot \vec{w}_x =0$
(Remark~\ref{remms}). Let $l^\prime_1, \ldots , l^\prime_s$ be the
remaining indexes from $\{l+1, \ldots , l+v\}$. The idea in our
improvement to Theorem~\ref{thehere} is to consider separately the
following $s+1$ cases:
$$
\begin{array}{ll}
{\mbox{Case 0:}}&\vec{c} \cdot \vec{w}_{l^\prime_1}= \cdots = \vec{c}
\cdot \vec{w}_{l^\prime_s}=0.\\
{\mbox{Case 1:}}&\vec{c} \cdot \vec{w}_{l^\prime_1}\neq 0. \\
{\mbox{Case 2:}}&\vec{c} \cdot \vec{w}_{l^\prime_1}=0, \vec{c} \cdot
\vec{w}_{l^\prime_2}\neq 0.\\
 & {\mbox{ \ \ \hspace{2cm}}} \vdots \\
{\mbox{Case s:}}&\vec{c} \cdot \vec{w}_{l^\prime_1}= \cdots =\vec{c}
\cdot{\vec{w}}_{l^\prime_{s-1}}=0, \vec{c} \cdot
\vec{w}_{l^\prime_s}\neq 0.
\end{array}
$$
In each case $z$ we establish a set
${\mathcal{I}}^\prime_{z}\subseteq {\mathcal{I}}$ such that for every
non-zero linear combination $\sum_{i\in
  {\mathcal{I}}^\prime_z}\alpha_i \vec{u}_i$, $\alpha_i \in
{\mathbb{F}}_q$, 
a $\vec{v}_j \in {\mathcal{V}}$ exists with 
$$\vec{c} \cdot \bigg( \big(\sum_{i \in {\mathcal{I}}^\prime_z}\alpha_i
\vec{u}_i\big)\ast \vec{v}_j\bigg)  \neq 0.$$
From Lemma~\ref{lemlommelaerke} it then follows that $w_H(\vec{c})
\geq \min \{ \# {\mathcal{I}}^\prime_0, \ldots ,
\#{\mathcal{I}}^\prime_s\}$. The following definition is what we need to
  deal with the above set-up. We should stress that although
  Definition~\ref{definvolved} may appear long and technical, it is
  often quite manageable. This will be demonstrated in Section~\ref{secexamples}.
\begin{definition}\label{definvolved}
Consider the numbers $1 \leq l, l+1, \ldots , l+g \leq n$. 
A set 
${\mathcal{I}}^\prime\subseteq {\mathcal{I}}$ is said to have the $\mu$-property with
respect to $l$ with exception $\{l+1, \ldots , l+g\}$ if for all $i
\in {\mathcal{I}}^\prime$ a $j\in {\mathcal{I}}$ exists such that
\begin{itemize}
\item[(1a)] $\bar{\rho}_{\mathcal{W}}(\vec{u}_i \ast \vec{v}_j)=l$, and
\item[(1b)] for
  all $i^\prime \in {\mathcal{I}}^\prime$ with $i^\prime < i$ either $\bar{\rho}_{\mathcal{W}}(\vec{u}_{i^\prime}\ast
  \vec{v}_j)< l$ or $\bar{\rho}_{\mathcal{W}}(\vec{u}_{i^\prime}\ast
  \vec{v}_j) \in \{l+1, \ldots , l+g\}$ holds.
\end{itemize}
Assume next that $l+g+1 \leq n$. The set ${\mathcal{I}}^\prime$ is
said to have the relaxed $\mu$-property with respect to $(l,l+g+1)$
with exception $\{l+1, \ldots , l+g\}$ if for all $i \in
{\mathcal{I}}^\prime$ a $j \in {\mathcal{I}}$ exists such that either
conditions $(1a)$ and $(1b)$ above hold or 
\begin{itemize}
\item[(2a)] $\bar{\rho}_{\mathcal{W}}(\vec{u}_i \ast \vec{v}_j)=l+g+1$,
  and
\item[(2b)] $(i,j)$ is OWB with respect to ${\mathcal{I}}^\prime$, and
\item[(2c)] no $i^\prime \in {\mathcal{I}}^\prime$ with $i^\prime < i$
  satisfies $\bar{\rho}_{\mathcal{W}}(\vec{u}_{i^\prime} \ast \vec{v}_j)=l$.
\end{itemize}
\end{definition}
From the discussion above we arrive at the following improvement to Theorem~\ref{thehere}.
\begin{theorem}\label{thenew}\label{thestrongone}
Consider a non-zero codeword $\vec{c}$ and let $l=m(\vec{c})$. Choose
a non-negative integer $v$ such that $l+v\leq n$. Assume that for some
indexes $x \in \{l+1, \ldots , l+v\}$ we know {\textit{a priori}} that
$\vec{c} \cdot \vec{w}_x=0$. Let $l^\prime_1< \cdots  l^\prime_s$ be
the remaining indexes from $\{l+1, \ldots , l+v\}$. Consider the sets
${\mathcal{I}}^\prime_0, {\mathcal{I}}_1^\prime, \ldots ,
{\mathcal{I}}_s^\prime$ such that:
\begin{itemize}
\item ${\mathcal{I}}^\prime_0$ has the $\mu$-property with respect to
  $l$ with exception $\{ l+1, \ldots , l+v\}$.
\item For $i=1, \ldots , s$, ${\mathcal{I}}^\prime_i$ has the relaxed
  $\mu$-property with respect to $(l, l^\prime_i)$ with exception
  $\{l+1, \ldots , l^\prime_i -1\}$.
\end{itemize}
We have 
\begin{equation}
w_H(\vec{c}) \geq \min \{ \# {\mathcal{I}}_0^\prime, \#
{\mathcal{I}}_1^\prime, \ldots ,  \# {\mathcal{I}}_s^\prime\}. \label{eqcirkel} 
\end{equation}
To establish a lower bound on the minimum distance of a code $C$ we
repeat the above process for each $l\in m(C)$. For each such $l$ we 
choose a corresponding $v$, we determine sets ${\mathcal{I}}^{\prime}_i$ as
  above and we calculate the right side of~(\ref{eqcirkel}). The
  smallest value found constitutes a lower bound on the minimum distance.
\end{theorem}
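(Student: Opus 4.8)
The plan is to derive Theorem~\ref{thestrongone} as a case analysis feeding into Lemma~\ref{lemlommelaerke}. Fix a non-zero codeword $\vec{c}$, set $l=m(\vec{c})$, and let $l'_1<\cdots<l'_s$ be the indexes in $\{l+1,\ldots,l+v\}$ for which we do \emph{not} know a priori that $\vec{c}\cdot\vec{w}_x=0$. Since $\vec{c}\cdot\vec{w}_{l}\neq0$ by definition of $m$, exactly one of the $s+1$ mutually exclusive cases listed before Definition~\ref{definvolved} occurs: either all $\vec{c}\cdot\vec{w}_{l'_t}=0$ (Case $0$), or there is a least $t$ with $\vec{c}\cdot\vec{w}_{l'_t}\neq0$ (Case $t$). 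It suffices to show that in Case $0$ the set $U_0=\mathrm{Span}\{\vec{u}_i\mid i\in\mathcal{I}'_0\}$ satisfies the hypothesis of Lemma~\ref{lemlommelaerke} with $D=\langle\vec c\rangle$ and $\delta=\#\mathcal I'_0$, and that in Case $t$ the set $U_t=\mathrm{Span}\{\vec{u}_i\mid i\in\mathcal{I}'_t\}$ satisfies it with $\delta=\#\mathcal I'_t$; then $w_H(\vec c)=|\mathrm{Supp}\,\langle\vec c\rangle|\geq\#\mathcal I'_z$ in whichever case $z$ holds, hence $w_H(\vec c)\ge\min_z\#\mathcal I'_z$. (One also checks $\dim U_z=\#\mathcal I'_z$ since the $\vec u_i$ are basis vectors.)

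For the verification, take a non-zero $\vec{x}=\sum_{i\in\mathcal I'_z}\alpha_i\vec u_i$ and let $i$ be the largest index in $\mathcal I'_z$ with $\alpha_i\neq0$. In Case $0$, pick the $j$ guaranteed by the $\mu$-property with exception $\{l+1,\ldots,l+v\}$: by (1a), $\bar\rho_{\mathcal W}(\vec u_i\ast\vec v_j)=l$, and by (1b) every smaller $i'\in\mathcal I'_z$ has $\bar\rho_{\mathcal W}(\vec u_{i'}\ast\vec v_j)<l$ or $\in\{l+1,\ldots,l+v\}$. Consequently $\vec x\ast\vec v_j$ is a linear combination of vectors each lying in $\mathrm{Span}\{\vec w_1,\ldots,\vec w_{l+v}\}$, with the $\vec w_l$-coordinate coming only from the $\alpha_i\vec u_i\ast\vec v_j$ term (the smaller terms contribute only to coordinates $1,\ldots,l-1$ or $l+1,\ldots,l+v$), so writing $\vec x\ast\vec v_j$ in the basis $\mathcal W$ its $\vec w_l$-coefficient is a non-zero scalar multiple of $\alpha_i$. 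Pairing with $\vec c$: the indexes $l+1,\ldots,l+v$ split into those known a priori to be killed and the $l'_t$, all of which are killed in Case $0$, while $\vec c\cdot\vec w_{l'}=0$ for $l'<l$ by minimality of $l=m(\vec c)$; thus $\vec c\cdot(\vec x\ast\vec v_j)$ equals (non-zero scalar)$\cdot\alpha_i\cdot(\vec c\cdot\vec w_l)\neq0$. This is the hypothesis of Lemma~\ref{lemlommelaerke}.

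In Case $t$ the argument branches according to which alternative of Definition~\ref{definvolved}'s relaxed $\mu$-property the chosen $j$ (for the top index $i$) satisfies. If $(1a)$, $(1b)$ hold with exception $\{l+1,\ldots,l'_t-1\}$: as above $\vec x\ast\vec v_j$ has $\vec w_l$-coefficient a non-zero multiple of $\alpha_i$ and its only other possibly-nonzero coefficients among $\vec w_{l},\ldots$ are on $\vec w_{l+1},\ldots,\vec w_{l'_t-1}$; pairing with $\vec c$ and using $\vec c\cdot\vec w_{l'_1}=\cdots=\vec c\cdot\vec w_{l'_{t-1}}=0$ (which hold in Case $t$) together with the a-priori zeros and $\vec c\cdot\vec w_{l'}=0$ for $l'<l$, we again get $\vec c\cdot(\vec x\ast\vec v_j)=(\text{non-zero})\cdot\alpha_i\cdot(\vec c\cdot\vec w_l)\neq0$. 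If instead $(2a)$, $(2b)$, $(2c)$ hold: by $(2b)$, OWB with respect to $\mathcal I'_t$ means every smaller $i'\in\mathcal I'_t$ has $\bar\rho_{\mathcal W}(\vec u_{i'}\ast\vec v_j)<l'_t=\bar\rho_{\mathcal W}(\vec u_i\ast\vec v_j)$, so the $\vec w_{l'_t}$-coefficient of $\vec x\ast\vec v_j$ is a non-zero multiple of $\alpha_i$; since $\vec c\cdot\vec w_{l'_t}\neq0$ in Case $t$, the dominant term survives, but one must also control the lower terms $\vec w_1,\ldots,\vec w_{l'_t-1}$ against $\vec c$. Here $(2c)$ is the crucial ingredient: it forbids any smaller $i'\in\mathcal I'_t$ from contributing to coordinate $\vec w_l$, and the remaining lower coordinates $\vec w_{l+1},\ldots,\vec w_{l'_t-1}$ among the $l'_r$ are those with $r<t$, killed by $\vec c$ in Case $t$, while coordinates $<l$ are killed by minimality of $m(\vec c)$ and the a-priori zeros dispose of the rest --- so the pairing is still non-zero. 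Invoking Lemma~\ref{lemlommelaerke} once more completes Case $t$, and the final sentence of the theorem (repeating over $l\in m(C)$) is immediate since every non-zero codeword has its $m$-value in $m(C)$.

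The main obstacle is the bookkeeping in the $(2a)$--$(2c)$ branch: one must argue that when we switch attention from the target weight $l$ to the larger weight $l'_t=l+g+1$, \emph{none} of the intermediate coordinates $\vec w_{l+1},\ldots,\vec w_{l'_t-1}$ nor the coordinate $\vec w_l$ itself can resurrect a non-trivial contribution from the lower-index $\vec u_{i'}$ terms when paired with this particular $\vec c$ --- and this is exactly what conditions $(2b)$ (killing everything at weight $\ge l'_t$ from below) and $(2c)$ (killing the weight-$l$ contribution from below) are engineered to guarantee, the intermediate weights being handled by the case hypothesis $\vec c\cdot\vec w_{l'_1}=\cdots=\vec c\cdot\vec w_{l'_{t-1}}=0$ plus the a-priori vanishing. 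Making this precise amounts to carefully expanding $\vec x\ast\vec v_j$ in the $\mathcal W$-basis and tracking which coordinates each summand can occupy; it is routine but the indexing must be done with care.
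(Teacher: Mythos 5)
Your overall route is the same as the paper's: the paper obtains Theorem~\ref{thestrongone} exactly by the case split over which of $\vec{c}\cdot\vec{w}_{l'_1},\ldots,\vec{c}\cdot\vec{w}_{l'_s}$ is the first non-zero one, by asserting that in Case $z$ every non-zero combination $\vec{x}=\sum_{i\in\mathcal{I}'_z}\alpha_i\vec{u}_i$ admits a $\vec{v}_j$ with $\vec{c}\cdot(\vec{x}\ast\vec{v}_j)\neq 0$, and by invoking Lemma~\ref{lemlommelaerke}; the paper never writes out that verification, and it is precisely the verification you added that contains a false inference. From (1b) you conclude that the lower-index products ``contribute only to coordinates $1,\ldots,l-1$ or $l+1,\ldots,l+v$'', and in the (2a)--(2c) branch you claim (2c) ``forbids any smaller $i'$ from contributing to coordinate $\vec{w}_l$''. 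Neither follows: $\bar{\rho}_{\mathcal{W}}$ only records the \emph{largest} index carrying a non-zero coefficient in the $\mathcal{W}$-expansion, so a product with $\bar{\rho}_{\mathcal{W}}(\vec{u}_{i'}\ast\vec{v}_j)\in\{l+1,\ldots,l+g\}$ may well have a non-zero $\vec{w}_l$-coefficient, and in the relaxed branch even the top product $\vec{u}_i\ast\vec{v}_j$ (with $\bar{\rho}_{\mathcal{W}}=l+g+1$) may have one, while (2c) only excludes leading position exactly $l$. Consequently the $\vec{w}_l$-coefficient of $\vec{x}\ast\vec{v}_j$ need not be a non-zero multiple of $\alpha_i$; it can be cancelled by the exception terms, and since every other coordinate that can occur is annihilated by $\vec{c}$, the pairing can vanish (in the relaxed branch the surviving contributions at positions $l$ and $l'_t$ can likewise cancel each other). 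Your asserted identity $\vec{c}\cdot(\vec{x}\ast\vec{v}_j)=(\text{non-zero})\cdot\alpha_i\cdot(\vec{c}\cdot\vec{w}_l)$ is therefore unjustified.

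This is not mere bookkeeping that more careful indexing repairs. Over $\mathbb{F}_2$ with $n=5$ take $\vec{w}_1=\vec{e}_2$, $\vec{w}_2=\vec{e}_1$, $\vec{w}_3=\vec{e}_3$, $\vec{w}_4=\vec{e}_4$, $\vec{w}_5=\vec{e}_5$, let $\vec{u}_{i_1}=\vec{e}_1+\vec{e}_2+\vec{e}_3$, $\vec{u}_{i_2}=\vec{e}_1+\vec{e}_2+\vec{e}_4$ (with $i_1<i_2$) and $\vec{v}_{j_1}=\vec{e}_1+\vec{e}_2$, $\vec{v}_{j_2}=\vec{e}_1+\vec{e}_2+\vec{e}_3$ be members of bases $\mathcal{U}$, $\mathcal{V}$, and let $\vec{c}=\vec{e}_1$. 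Then $m(\vec{c})=2$, $\vec{c}\cdot\vec{w}_3=0$ (Case $0$ for $v=1$), $\vec{u}_{i_1}\ast\vec{v}_{j_1}=\vec{u}_{i_2}\ast\vec{v}_{j_2}=\vec{w}_1+\vec{w}_2$ and $\vec{u}_{i_1}\ast\vec{v}_{j_2}=\vec{w}_1+\vec{w}_2+\vec{w}_3$, so $\{i_1,i_2\}$ has the $\mu$-property with respect to $l=2$ with exception $\{3\}$; yet $(\vec{u}_{i_1}+\vec{u}_{i_2})\ast\vec{v}_j$ has first coordinate $0$ for every $j$, hence pairs to $0$ with $\vec{c}$, the hypothesis of Lemma~\ref{lemlommelaerke} fails, and indeed $w_H(\vec{c})=1<2$. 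So conditions (1a)--(1b), (2a)--(2c) by themselves do not deliver the claim you assert; what the argument actually needs is control of the $\vec{w}_l$-coefficients of the products whose $\bar{\rho}_{\mathcal{W}}$-value falls in the exception range (and, in the relaxed branch, of the top product), not merely of their leading positions --- in the paper's $\mathbb{F}_8$ computations this extra information is present because the relevant products are single footprint monomials, e.g.\ $\mathrm{ev}(X^2Y^3+I_8)=\vec{w}_{18}$, which has zero $\vec{w}_{17}$-coordinate. To close the gap you must either verify (or impose) that coefficient information, or find a genuinely different argument; as written, both the Case-$0$ step and the (2a)--(2c) step fail.
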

\begin{remark}\label{remslut}
The results in Remark~\ref{remstart} also hold if we replace
Theorem~\ref{thehere} with Theorem~\ref{thenew}. We shall denote the
resulting improved codes by $\tilde{C}_{fim}(\delta)$ (here, {\textit{fim}} stands
for further improved).
\end{remark}
\begin{remark}
Assume ${\mathcal{I}}^\prime$ has the $\mu$-property with respect to
$l$. One possible choice of sets ${\mathcal{I}}^\prime_0,{\mathcal{I}}^\prime_1,\ldots ,
{\mathcal{I}}^\prime_s \subseteq {\mathcal{I}}$ in
Theorem~\ref{thestrongone} would be to choose all of them to be equal to
${\mathcal{I}}^\prime$. It follows that
Theorem~\ref{thestrongone} is indeed at least as strong as
Theorem~\ref{thehere}. The above observation relates to the fact that
Theorem~\ref{thenew} reduces to Theorem~\ref{thehere} when $v$ is
chosen to be always equal to $0$. 
\end{remark}

As shall be demonstrated later in the paper, Theorem~\ref{thenew} can
sometimes be much better than Theorem~\ref{thehere}. 
 For Theorem~\ref{thenew} to be operational we need a clever method to choose
for each $l \in m(C)$ the corresponding number $v$. As shall be clear form the examples in
Section~\ref{secexamples} for affine variety codes there is a very
natural way to do this. Another remark is that when the task is to
estimate the minimum distance of a fixed code, then we can set $v$
equal to $0$ for most values of $l$, reserving non-zero values to those $l$
for which Theorem~\ref{thehere} produces the smallest numbers. These
are the numbers that need to be improved.\\

In a similar way as Theorem~\ref{thehere} was enhanced to deal with
generalized Hamming weighs and relative generalized Hamming weights
we can enhance Theorem~\ref{thestrongone}. The notation in
Definition~\ref{definvolved} being already involved we only illustrate
how to deal with the second generalized Hamming weight. From that description
it should be clear how to deal with higher weights.
\begin{proposition}\label{propihop}
Let the notation be as in Theorem~\ref{thenew}. Consider a subspace $D
\subseteq C$ of dimension $2$, say $m(D)=\{a,b\}$. Let $v_a$ be the
$v$ corresponding to $l=a$. Let $a_1^\prime < \cdots <a^\prime_{s_a}$
be the numbers $l^\prime_1< \cdots < l^\prime_s$ corresponding to
$l=a$. Analogously for the case b. Referring to Definition~\ref{definvolved}, for
$\alpha=1, \ldots , s_a$ and $\beta = 1, \ldots , s_b$ we define
subsets of ${\mathcal{I}}$ as follows:
\begin{itemize}
\item ${\mathcal{I}}^{\prime \prime}_{0,0}$ is a set such that for all
  $i \in {\mathcal{I}}^{\prime \prime}_{0,0}$ for an $l \in
  \{a,b\}$ a $j$ exists such that (1a) and (1b) hold with
  $g=v_a$ if $l=a$, and $g=v_b$ if $l=b$.
\item  ${\mathcal{I}}^{\prime \prime}_{\alpha,0}$ is a set such that
  for all $i \in {\mathcal{I}}^{\prime \prime}_{\alpha,0}$ a $j$
  exists such that one of the following two conditions holds:
\begin{itemize}
\item Either (1a), (1b) or (2a), (2b), (2c) hold with $l=a$ and
  $g+1=a^\prime_\alpha$.
\item (1a) and (1b) hold with $l=b$ and $g=v_b$.
\end{itemize} 
\item ${\mathcal{I}}^{\prime \prime}_{0,\beta}$ is defined similarly
  to ${\mathcal{I}}^{\prime \prime}_{\alpha,0}$.
\item  ${\mathcal{I}}^{\prime \prime}_{\alpha,\beta}$ is a set such
  that for all $i \in  {\mathcal{I}}^{\prime \prime}_{\alpha,\beta}$
  an $l \in \{ a,b\}$ and a $j \in
  {\mathcal{I}}$ exist such that either (1a), (1b) or (2a), (2b), (2c)
  hold. Here, $g+1= a^\prime_\alpha$ if $l=a$, and $g+1=b^\prime_\beta$ if
  $l=b$.
\end{itemize}
The support of $D$ is of size at least equal to the smallest
cardinality of the above sets.
To establish a lower bound on the second generalized Hamming weight of a code $C$ we
repeat the above process for each $(a,b)\in m(C) \times m(C)$ with $a <b$. The
  smallest value found constitutes a lower bound on the second
  generalized Hamming weight.
\end{proposition}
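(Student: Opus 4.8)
The plan is to run, for the two-dimensional $D$, the same case analysis that underlies Theorem~\ref{thenew}, but applied to one codeword at a time, and then to feed the result into Lemma~\ref{lemlommelaerke} in exactly the way Theorem~\ref{thebelow} derives the unimproved bound from it. First I would normalise $D$: since $\dim D = 2$, Proposition~\ref{pro1} gives $\# m(D) = 2$, so write $m(D) = \{a,b\}$ with $a<b$. Choosing a basis of $D$ and performing Gaussian elimination on the matrix of the scalars $\vec{c}\cdot\vec{w}_k$ produces a basis $\{\vec{c}_a,\vec{c}_b\}$ of $D$ with $m(\vec{c}_a)=a$ and $m(\vec{c}_b)=b$; in particular $\vec{c}_a\cdot\vec{w}_k=0$ for $k<a$, $\vec{c}_a\cdot\vec{w}_a\neq 0$, and symmetrically for $\vec{c}_b$. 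As in Remark~\ref{remms}, the indexes of $\{a+1,\ldots,a+v_a\}$ at which $\vec{c}_a$ is \emph{a priori} known to vanish are those $x$ for which $\vec{w}_x$ is one of the parity checks of $C$; the complementary indexes are $a^\prime_1<\cdots<a^\prime_{s_a}$, and likewise one obtains $b^\prime_1<\cdots<b^\prime_{s_b}$ inside $\{b+1,\ldots,b+v_b\}$ for $\vec{c}_b$.

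Next I would split into cases. For $\vec{c}_a$ distinguish sub-cases $\alpha\in\{0,\ldots,s_a\}$: sub-case $0$ says $\vec{c}_a\cdot\vec{w}_{a^\prime_1}=\cdots=\vec{c}_a\cdot\vec{w}_{a^\prime_{s_a}}=0$, and sub-case $\alpha\geq 1$ says $\vec{c}_a\cdot\vec{w}_{a^\prime_1}=\cdots=\vec{c}_a\cdot\vec{w}_{a^\prime_{\alpha-1}}=0$ while $\vec{c}_a\cdot\vec{w}_{a^\prime_\alpha}\neq 0$. These $s_a+1$ possibilities are mutually exclusive and exhaustive; one does the same for $\vec{c}_b$ with sub-cases $\beta$, and for the fixed $D$ exactly one pair $(\alpha,\beta)$ occurs. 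The point of the normalisation is that in the occurring sub-case for $a$ the word $\vec{c}_a$ is orthogonal to $\vec{w}_k$ for all $k<a$ and all $k$ in the associated exception set (which is $\{a+1,\ldots,a+v_a\}$ if $\alpha=0$ and $\{a+1,\ldots,a^\prime_\alpha-1\}$ if $\alpha\geq 1$), while $\vec{c}_a\cdot\vec{w}_a\neq 0$ and, when $\alpha\geq 1$, also $\vec{c}_a\cdot\vec{w}_{a^\prime_\alpha}\neq 0$; the analogous statement holds for $\vec{c}_b$. This is exactly the configuration Definition~\ref{definvolved} is built to exploit.

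Then, fixing the occurring $(\alpha,\beta)$, I would set $U=\mbox{Span}\{\vec{u}_i\mid i\in{\mathcal{I}}^{\prime\prime}_{\alpha,\beta}\}$, a subspace of dimension $\#{\mathcal{I}}^{\prime\prime}_{\alpha,\beta}$, and verify the hypothesis of Lemma~\ref{lemlommelaerke} for this $U$ and the given $D$. Let $\vec{u}=\sum_{i\in{\mathcal{I}}^{\prime\prime}_{\alpha,\beta}}\gamma_i\vec{u}_i\neq\vec{0}$ and $i^\ast=\max\{i\mid\gamma_i\neq 0\}$. By the defining property of ${\mathcal{I}}^{\prime\prime}_{\alpha,\beta}$ there are $l^\ast\in\{a,b\}$ and $j\in{\mathcal{I}}$ such that, in the terminology of Definition~\ref{definvolved}, the pair $(i^\ast,j)$ satisfies either $(1a),(1b)$ or $(2a),(2b),(2c)$ with $l=l^\ast$, the exception set being the one attached to sub-case $\alpha$ if $l^\ast=a$ and to sub-case $\beta$ if $l^\ast=b$, and with $g+1=a^\prime_\alpha$ (resp.\ $b^\prime_\beta$) in the relaxed branch. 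Put $\vec{c}:=\vec{c}_{l^\ast}$; by the previous paragraph $\vec{c}_{l^\ast}$ is orthogonal to $\vec{w}_k$ for all $k<l^\ast$ and all $k$ in the relevant exception set, while it pairs non-trivially with $\vec{w}_{l^\ast}$ and, in the relaxed branch, with $\vec{w}_{l^\ast+g+1}$. Expanding $(\vec{u}\ast\vec{v}_j)\cdot\vec{c}_{l^\ast}=\sum_i\gamma_i\,(\vec{u}_i\ast\vec{v}_j)\cdot\vec{c}_{l^\ast}$ and using conditions $(1a),(1b)$ or $(2a),(2b),(2c)$ to annihilate every summand whose $\bar\rho_{\mathcal{W}}$-value lies below $l^\ast$ or in the exception set, one is left with $\gamma_{i^\ast}$ times a non-zero scalar times $\vec{c}_{l^\ast}\cdot\vec{w}_{l^\ast}$, or times $\vec{c}_{l^\ast}\cdot\vec{w}_{l^\ast+g+1}$ in the relaxed branch; this is precisely the computation performed in the proof of Theorem~\ref{thenew}, now carried out with the single codeword $\vec{c}_{l^\ast}$ in place of the general one, and it gives $(\vec{u}\ast\vec{v}_j)\cdot\vec{c}_{l^\ast}\neq 0$. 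Lemma~\ref{lemlommelaerke} then yields $\#\mbox{Supp}\,D\geq\#{\mathcal{I}}^{\prime\prime}_{\alpha,\beta}$, and since exactly one $(\alpha,\beta)$ occurs this gives $\#\mbox{Supp}\,D\geq\min_{(\alpha,\beta)}\#{\mathcal{I}}^{\prime\prime}_{\alpha,\beta}$. For the final statement, Remark~\ref{remm} identifies $m(C)$ with ${\mathcal{I}}\setminus\{l_1,\ldots,l_{n-k}\}$, so every two-dimensional $D\subseteq C$ has $m(D)=\{a,b\}$ for some $a<b$ in $m(C)$; running the above for that pair bounds $\#\mbox{Supp}\,D$ from below by the corresponding case-minimum, and minimising over all such pairs $(a,b)$ bounds $d_2(C)$ from below.

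I expect the main obstacle to be organisational rather than conceptual: one must set up the dictionary matching each of the four shapes of ${\mathcal{I}}^{\prime\prime}_{\alpha,\beta}$ (the cases $(0,0)$, $(\alpha,0)$, $(0,\beta)$ and $(\alpha,\beta)$ with $\alpha,\beta\geq 1$) to the right instance of Definition~\ref{definvolved}, with the right exception set and the right codeword $\vec{c}_a$ or $\vec{c}_b$, and then check that the orthogonality relations guaranteed by the occurring sub-case are exactly those needed so that, after the lower-order summands have been discarded, the surviving leading contribution $\gamma_{i^\ast}$ times a non-zero scalar times $\vec{c}_{l^\ast}\cdot\vec{w}_{l^\ast}$ (or $\vec{c}_{l^\ast}\cdot\vec{w}_{l^\ast+g+1}$) does not get cancelled. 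Once this correspondence is pinned down, each case is a verbatim repetition of the argument behind Theorem~\ref{thenew}, and the passage from a fixed $D$ to $d_2(C)$ is a finite minimisation; the higher generalized Hamming weights are then handled by the evident extension of the case split to $t$ codewords $\vec{c}_{a_1},\ldots,\vec{c}_{a_t}$ with $m(D)=\{a_1,\ldots,a_t\}$.
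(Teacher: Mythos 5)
Your overall strategy --- normalising a basis $\{\vec{c}_a,\vec{c}_b\}$ of $D$ with $m(\vec{c}_a)=a$, $m(\vec{c}_b)=b$, splitting into the mutually exclusive sub-cases $(\alpha,\beta)$ according to the first index among $a^\prime_1,\ldots,a^\prime_{s_a}$ (resp.\ $b^\prime_1,\ldots,b^\prime_{s_b}$) at which $\vec{c}_a$ (resp.\ $\vec{c}_b$) pairs non-trivially with the corresponding $\vec{w}$, feeding the occurring pair into Lemma~\ref{lemlommelaerke} via $U=\mathrm{Span}\{\vec{u}_i\mid i\in{\mathcal{I}}^{\prime\prime}_{\alpha,\beta}\}$, and finally minimising over $(\alpha,\beta)$ and over pairs $a<b$ in $m(C)$ --- is exactly the route the paper intends (it gives no separate proof, only the appeal to the discussion preceding Theorem~\ref{thenew}). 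The genuine gap is in your verification of the hypothesis of Lemma~\ref{lemlommelaerke}. You claim that conditions (1a),(1b) (resp.\ (2a)--(2c)) let you ``annihilate every summand whose $\bar\rho_{\mathcal{W}}$-value lies below $l^\ast$ or in the exception set'', leaving $\gamma_{i^\ast}$ times a non-zero multiple of $\vec{c}_{l^\ast}\cdot\vec{w}_{l^\ast}$. That is not what the conditions give: if $\bar\rho_{\mathcal{W}}(\vec{u}_{i^\prime}\ast\vec{v}_j)$ lies in the exception set, then $\vec{u}_{i^\prime}\ast\vec{v}_j$ is only known to lie in $\mathrm{Span}\{\vec{w}_1,\ldots,\vec{w}_{l^\ast+g}\}$, and it may carry a non-zero coefficient on $\vec{w}_{l^\ast}$ itself; pairing with $\vec{c}_{l^\ast}$ kills its components of index below $l^\ast$ and in the exception set, but the $\vec{w}_{l^\ast}$-component survives because $\vec{c}_{l^\ast}\cdot\vec{w}_{l^\ast}\neq 0$, and it can cancel the leading contribution. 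Concretely, if $\vec{u}_{i^\ast}\ast\vec{v}_j=\vec{w}_{l^\ast}$ while $\vec{u}_{i^\prime}\ast\vec{v}_j=\vec{w}_{l^\ast}+\vec{w}_{l^\ast+1}$ with $l^\ast+1$ in the exception set (so (1a),(1b) hold for $(i^\ast,j)$), then $\vec{u}=\vec{u}_{i^\ast}-\vec{u}_{i^\prime}$ gives $\vec{u}\ast\vec{v}_j=-\vec{w}_{l^\ast+1}$, whence $(\vec{u}\ast\vec{v}_j)\cdot\vec{c}_{l^\ast}=0$ in precisely the sub-case you are in; such configurations are realisable with honest bases $({\mathcal{U}},{\mathcal{V}},{\mathcal{W}})$, and then no other choice of $j$, nor switching to the other codeword of $D$, is guaranteed to rescue the inequality. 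The relaxed branch has the same defect, aggravated by the fact that $\vec{u}_{i^\ast}\ast\vec{v}_j$ itself may have a $\vec{w}_{l^\ast}$-component, so the two surviving terms $\mu_{l^\ast}(\vec{c}\cdot\vec{w}_{l^\ast})+\mu_{l^\ast+g+1}(\vec{c}\cdot\vec{w}_{l^\ast+g+1})$ may cancel.

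To close the argument you need an extra property that conditions (1a)--(1b) and (2a)--(2c) do not supply: the products $\vec{u}_{i^\prime}\ast\vec{v}_j$ whose $\bar\rho_{\mathcal{W}}$-value falls in the exception set (and, in the relaxed branch, also $\vec{u}_{i}\ast\vec{v}_j$ itself) must have zero coefficient on $\vec{w}_{l^\ast}$ in their expansion with respect to ${\mathcal{W}}$; then your computation does terminate in $\gamma_{i^\ast}$ times a non-zero scalar times $\vec{c}_{l^\ast}\cdot\vec{w}_{l^\ast}$ (or $\vec{c}_{l^\ast}\cdot\vec{w}_{l^\ast+g+1}$). In the paper's affine-variety examples this holds automatically, because there the offending products are evaluations of single monomials of the footprint (e.g.\ $M_6M_7=M_{18}$ in the ${\mathbb{F}}_8$ example), so they have no $\vec{w}_{17}$-component; but it is not a consequence of Definition~\ref{definvolved} as stated, and the paper's own sketch glosses over the same point. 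So you have not misread the intended method, but as written the key annihilation step does not follow; you must either strengthen the hypotheses on the sets ${\mathcal{I}}^{\prime\prime}_{\cdot,\cdot}$ (equivalently, Definition~\ref{definvolved}) to require the vanishing $\vec{w}_{l^\ast}$-coefficients, or restrict the proposition to settings where that vanishing is guaranteed.
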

Applying in larger generality the method described in the above proposition we derive lower
bounds on any generalized Hamming weights of $C$. 
It is clear that
this method can be of much higher complexity than the method described in
Theorem~\ref{deter9}. To lower the complexity we choose (referring to
the case of the second weight) most $v_a$ and
$v_b$ equal to zero, reserving non-zero values to those $(a,b)$ for
which Theorem~\ref{deter9} produces low
values. As shall be demonstrated in
the following section, Proposition~\ref{propihop} and its
generalization to higher weights can sometimes produce
much better results than Theorem~\ref{deter9}.\\
Similar results on the relative generalized Hamming weights as those
mentioned at the end of Section~\ref{sectwo} hold for the method
described above.

\section{Examples}\label{secexamples}
In this section we apply the advisory bound and the improved bound
from Section~\ref{secfurther} to affine variety codes coming from two particular
curves. The first curve corresponds to~\cite[Sec.\ 3.1]{salazar}. It
is a plane curve over ${\mathbb{F}}_8$. The second curve is the
natural counterpart for the field ${\mathbb{F}}_{27}$. We shall need a
couple of results from Gr\"{o}bner basis theory.\\
\subsection{Some results from Gr\"{o}bner basis theory}
Let $\prec$ be a monomial ordering on the set of monomials in $X_1,
\ldots , X_m$. Given an ideal $J \subseteq k[X_1, \ldots , X_m]$,
where $k$ is a field, the footprint $\Delta_{\prec}(J)$ is the set of
monomials that can not be found as leading monomial of any polynomial
in $J$. A Gr\"{o}bner basis, by definition, is a generating set for $J$ from
which the footprint can be easily read of. More formally, $\{L_1(X_1,
\ldots , X_m), \ldots , L_s(X_1, \ldots , X_m)\} \subseteq J$ is a Gr\"{o}bner
basis for $J$ with respect to $\prec$ if for any $F(X_1, \ldots , X_m)
\in J$ for some $i \in \{1, \ldots ,s\}$ it holds that
${\mbox{lm}}(L_i) | {\mbox{lm}}(F)$. Recall that $\{M+J \mid M \in \Delta_\prec(J)\}$ is a basis
for the quotient ring $k[X_1, \ldots , X_m]/J$ as a vector space over
$k$. In the following we shall assume that $k={\mathbb{F}}_q$ and that $J$ contains all the equations
$X_1^q-X_1, \ldots , X_m^q-X_m$, in which case we write
$J=I_q$. Obviously, the variety of $I_q$ is finite. 
Let the variety be $\{P_1, \ldots ,
P_{n}\}$ and consider the evaluation map ${\mbox{ev}}
: {\mathbb{F}}_q[X_1, \ldots , X_m]/I_q \rightarrow
{\mathbb{F}}_q^n$ given by ${\mbox{ev}}(F+I_q)=(F(P_1), \ldots ,
F(P_n))$. It is well-known that this map is a vector space
isomorphism implying that $n=\#   \Delta_\prec(I_q)$ holds. If we embark the 
vector space ${\mathbb{F}}_q^n$ with a second binary operation, namely
the component wise product from Definition~\ref{defcomp} then it becomes an
${\mathbb{F}}_q$-algebra. It is not difficult to see that the map
${\mbox{ev}}$ in this way becomes an isomorphism between
${\mathbb{F}}_q$-algebras. Hence, if we enumerate the elements of
$\Delta_\prec(I_q)=\{ M_1, \ldots , M_n\}$ according to $\prec$ and define
${\mathcal{U}}={\mathcal{V}}={\mathcal{W}}=\{
\vec{b}_1={\mbox{ev}}(M_1+I_q), \ldots ,
\vec{b}_n={\mbox{ev}}(M_n+I_q)\}$ then we can translate information on the algebraic structure of
${\mathbb{F}}_q[X_1, \ldots , X_m]/I_q$ into information regarding
the well-behaving properties as introduced in Definition~\ref{defweb}, \ref{def2},
\ref{def4}, \ref{definvolved} and Proposition~\ref{propihop}. We shall illustrate how to do this in
the following.\\
\subsection{Codes from a curve over ${\mathbb{F}}_8$}
In~\cite[Sec.\ 3.1]{salazar} Salazar et.\ al.\ considered curves of
the form $F_8(X,Y)=G_8(X)-H_8(Y) \in {\mathbb{F}}_8[X,Y]$ where
$G_8(X)$ is a polynomial of degree $4$ and $H_8(Y)$ is a polynomial
of degree $6$ both having the property that when evaluated in
${\mathbb{F}}_8$ they return values in ${\mathbb{F}}_2$. It is of no
implication to the estimation of code parameters if we restrict to
$G_8(X)$ being the trace polynomial $X^4+X^2+X$ and if we choose 
$H_8(Y)=Y^6+Y^5+Y^3$.
Consider the trace-polynomial corresponding to a general field
extension. It is well-known that the preimages of all the elements in the
ground field are of the same size. From this we conclude that the
particular polynomial 
$F_8(X,Y)=G_8(X)-H_8(Y)$ under consideration has exactly $2^5=32$
zeros. 

Let $I_8=\langle F_8(X,Y),X^8-X, Y^8-Y\rangle \subseteq
{\mathbb{F}}_8[X,Y]$. From the above discussion we know that the corresponding variety is of size $32$. If
we consider a monomial ordering such that ${\mbox{lm}}(F_8)=X^4$ then
there exist 
exactly $32$ monomials which are not divisible by any of the monomials
${\mbox{lm}}(F_8)=X^4, {\mbox{lm}}(Y^8-Y)=Y^8$. Hence,
$\{F_8(X,Y),Y^8-Y\}$ is a Gr\"{o}bner basis for $I_8$ and
$\Delta_\prec(I_8)=\{X^\alpha Y^\beta \mid 0 \leq \alpha < 4, 0\leq
\beta < 8\}$ holds. 
In the following we consider a particular weighted degree
lexicographic ordering for which ${\mbox{lm}}(F_8)=X^4$ holds.
Let $w(X)=3$, $w(Y)=2$, and in general
$w(X^\alpha Y^\beta)=3\alpha+2\beta$. We define $\prec_w$ to be the monomial
ordering given by $X^{\alpha_1}Y^{\beta_1} \prec_w
X^{\alpha_2}Y^{\beta_2}$ if either
$w(X^{\alpha_1}Y^{\beta_1})<w(X^{\alpha_2}Y^{\beta_2})$ 
or if
alternatively $w(X^{\alpha_1}Y^{\beta_1})=w(X^{\alpha_2}Y^{\beta_2})$
and $\alpha_1<\alpha_2$ hold.\\

Let $\Delta_{\prec_w}(I_8)=\{M_1, \ldots , M_{32}\}$, the monomials
being enumerated with respect to $\prec_w$. 
For the code construction we consider the
basis ${\mathcal{W}}=\{ \vec{w}_1={\mbox{ev}}(M_1+I_8), \ldots , {\vec{w}}_{32}={\mbox{ev}}(M_{32}+I_8)\}$. The
situation is described in Figure~\ref{figofido}.
\begin{figure}
$$
\begin{array}{ccc}
\begin{array}{c}
\begin{array}{cccc}
Y^7&XY^7&X^2Y^7&X^3Y^7\\
Y^6&XY^6&X^2Y^6&X^3Y^6\\
Y^5&XY^5&X^2Y^5&X^3Y^5\\
Y^4&XY^4&X^2Y^4&X^3Y^4\\
Y^3&XY^3&X^2Y^3&X^3Y^3\\
Y^2&XY^2&X^2Y^2&X^3Y^2\\
Y&XY&X^2Y&X^3Y\\
1&X&X^2&X^3
\end{array}\\
\ \\
{\mbox{Monomials in }} \Delta_{\prec_w}
\end{array}
&
\begin{array}{c}
\begin{array}{rrrr}
14&17&20&23\\
12&15&18&21\\
10&13&16&19\\
8&11&14&17\\
6&9&12&15\\
4&7&10&13\\
2&5&8&11\\
0&3&6&9
\end{array}\\
\ \\
{\mbox{Corresponding weights}}
\end{array}
&
\begin{array}{c}
\begin{array}{rrrr}
21&26&30&32\\
17&23&28&31\\
13&19&25&29\\
9&15&22&27\\
6&11&18&24\\
4&8&14&20\\
2&5&10&16\\
1&3&7&12
\end{array}\\
\ \\
{\mbox{Indexing of ${\mathcal{W}}$}}
\end{array}
\end{array}
$$
\caption{}
\label{figofido}
\end{figure}
We then set $\vec{u}_i=\vec{v}_i=\vec{w}_i$ for $i=1, \ldots , 32$ defining the
bases ${\mathcal{U}}$ and ${\mathcal{V}}$.\\

By definition, $\bar{\rho}_{\mathcal{W}}(\vec{u}_i \ast \vec{v}_j)=l$
if and only if 
$${\mbox{lm}}(M_iM_j {\mbox{ rem }} \{F_8(X,Y),X^8-X,Y^8-Y\})=M_l.$$
Further, $(i,j)$ is WB if and only if 
\begin{equation}
{\mbox{lm}}(M_{i^\prime}M_{j^\prime} {\mbox{ rem }}
\{F_8(X,Y),X^8-X,Y^8-Y\})\prec_w M_l \label{qenabsel}
\end{equation}
holds for all $i^\prime \leq i$ and $j^\prime \leq j$ with
$(i^\prime,j^\prime) \neq (i,j)$. There are two particular easy cases
to analyze:
\begin{itemize}
\item {\bf{Rule (I):}} If $M_iM_j=M_l$ then by the property of a monomial
  ordering~(\ref{qenabsel}) holds.
\item {\bf{Rule (II):}} If $w(M_i)+w(M_j)=w(M_l)$ and $w(M_{i^\prime})<
  w(M_i)$ for all $i^\prime < i$ and if $w(M_{j^\prime})<
  w(M_j)$ for all $j^\prime < j$, then (\ref{qenabsel}) holds. 
\end{itemize}
In a straightforward manner one derives similar rules regarding WWB
and OWB.\\
Consider $l=17$. Using Rule (I) we see that every $$(i,j) \in
\{ (1, 17), (2, 13), (4,9), (6,6), (9,4),(13,2),(17,1)\}$$ is WB with
$\bar{\rho}_{\mathcal{W}}(\vec{u}_i\ast \vec{v}_j)=17$.\\
We have $\bar{\rho}_{\mathcal{W}}(\vec{u}_3 \ast \vec{v}_{12})=17$ as
\begin{eqnarray}
&&{\mbox{lm}}(M_3 M_{12} {\mbox{ rem }}
\{F_8(X,Y),X^8-X,Y^8-Y\})\nonumber \\
&=&{\mbox{lm}}(X^4 {\mbox{ rem }}
\{F_8(X,Y),X^8-X,Y^8-Y\})\nonumber \\
&=&{\mbox{lm}}(Y^6+Y^5+X^2+Y^3+X)=Y^6=M_{17}.\nonumber
\end{eqnarray}
But $M_3M_{11}=M_{18}$ implying that $\bar{\rho}_{\mathcal{W}}(\vec{u}_3 \ast
\vec{v}_{11})=18$. Therefore $(3,12)$ is not WWB. However
$w(M_{i^\prime})< w(M_3)$ for all $i^\prime < 3$ and by a result
similar to  Rule (II), $(3,12)$ therefore is OWB.\\
We next claim that 
${\mathcal{I}}^\prime=\{1, 2, 4, 6, 9, 13, 17, 3, 12\}$ has the
$\mu$-property with respect to $17$. To this end, the only thing
missing to be checked is the case $i=12$. Clearly,
$\bar{\rho}_{\mathcal{W}}(\vec{u}_{12} \ast \vec{v}_3)=17$. Note that
$w(M_{12})=9$ does not belong to $\{w(M_i) \mid i
\in{\mathcal{I}}^\prime \backslash \{ 12\} \}$ and by an argument
similar to Rule (II) we conclude that $(12, 3)$ is OWB with respect to
${\mathcal{I}}^\prime$.\\
We next apply Theorem~\ref{thestrongone} with $l=17$ and $v=1$. Note
that $w(M_{17})=w(M_{18})<w(M_{19})$ which is what makes the choice
$v=1$ natural. Using similar arguments as above we see that 
$${\mathcal{I}}^\prime_0=\{1, 2, 4, 6, 9, 13, 17, 3, 12\}\cup \{7\}$$
has the $\mu$-property with respect to $17$ with exception $\{18\}$
and that 
$${\mathcal{I}}^\prime_1=\{1, 2, 4, 6, 9, 13, 17\} \cup \{ 3, 5, 8, 11\}$$
has the relaxed $\mu$-property with respect to $(17,18)$ with
exception $\{ \}$. Clearly, ${\mathcal{I}}^\prime_0$ is the smallest
of these two sets. \\
In conclusion, if $m(\vec{c})=17$ we get the following estimates:
\begin{itemize}
\item The Feng-Rao bound in the version with WB or WWB gives
  $w_H(\vec{c}) \geq 7$.
\item The same bound in the version with OWB produces $w_H(\vec{c})
  \geq 8$.
\item From the advisory bound we get $w_H(\vec{c}) \geq 9$.
\item Finally, our new bound produces $w_H(\vec{c}) \geq 10$.
\end{itemize}
Applying exactly the same techniques as above we get the following
estimates of $w_H(\vec{c})$ when $m(\vec{c})=21$:
\begin{itemize}
\item The Feng-Rao bound with WB or WWB gives
  $w_H(\vec{c}) \geq 8$.
\item The same bound in the version with OWB produces $w_H(\vec{c})
  \geq 10$.
\item From the advisory bound we get $w_H(\vec{c}) \geq 12$ (This is
  done by choosing ${\mathcal{I}}^\prime=\{1, 2, 4, 6, 9, 13, 17,
  21\}\cup\{3, 5, 12, 16\}$).
\item Finally, our new bound produces $w_H(\vec{c}) \geq 13$ (This
is done by choosing $v=1$, ${\mathcal{I}}^\prime_0=\{1, 2, 4, 6, 9,
13, 17, 21\}\cup\{3, 7, 12, 5, 10, 16\}$ and
${\mathcal{I}}^\prime_1=\{1, 2, 4, 6, 9, 13, 17, 21\}\cup \{3, 5, 8,
11, 15\}$).
\end{itemize}
For the remaining choices of $l\in {\mathcal{I}}$ neither the advisory
bound nor the improved bound from the present paper produces better
results than the Feng-Rao bound with WWB. As explained
in~\cite{salazar} for $m(\vec{c})=28$ and $m(\vec{c})=30$,
respectively, 
the Feng-Rao bound with WWB improves upon the same bound with WB by
lifting the estimates from $21$ to $22$ and from $24$ to $26$, respectively.\\
We first consider the codes $C(s)$ (See Remark~\ref{remstart} for the
definition). In Figure~\ref{figo1} we illustrate the parameters $k$,
 $d_1(C(s)), \ldots , d_5(C(s))$. As is seen, for all of the five
 choices of bounds: the Feng-Rao bound with WB, WWB, OWB, the advisory
 bound, and the bound from Section~\ref{secfurther}, there exist numbers $i$
 and $s$ such that the best estimate on $d_i(C(s))$ is obtained by
 this particular bound (and consequently also by the sharper bounds as
 well). Regarding the $6$th generalized
Hamming weight, only for one $s$ we can improve upon what is derived
from the Feng-Rao bound with WB. Namely, for $C(4)$ where the Feng-Rao bound
with WB or WWB produces the estimate $8$ whereas all other bounds give
$9$. 
\begin{figure}
\begin{center}
\includegraphics[width=128mm]{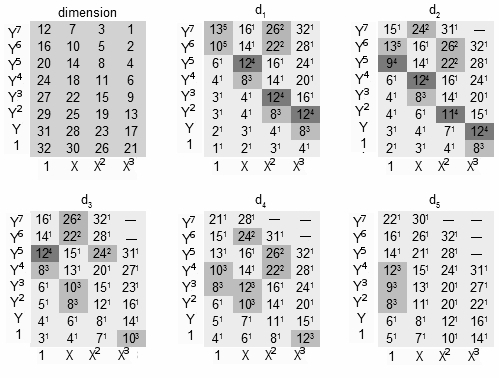}
\end{center}
\caption{The figure lists the dimensions of codes $C(s)$ over
  ${\mathbb{F}}_8$ and corresponding estimates on $d_1, \ldots ,
  d_5$. Information about $C(s)$ is placed at the position of
  $\vec{w}_{s+1}$. An entry $z^1$ means that the value $z$ was obtained from the
  Feng-Rao bound with WB, $z^2$ indicate that the same bound with WWB
  was used, and finally $z^3$ the same bound with OWB. With $z^4$ we
  indicate that the value $z$ was obtained from the advisory bound and
  by $z^5$ that the method from Section~\ref{secfurther} was used. The
symbol - inside the table indicates that the corresponding parameter
does not exist.}
\label{figo1}
\end{figure}
In Table~\ref{tabny} we illustrate that the various bounds
sometimes improve very much on each other by showing estimates for the first two
weights of the code $C(16)$. For this particular code for higher
weights all estimates are the same.
\begin{table}
\begin{center}
\begin{tabular}{l|cccccc}
&Feng-Rao&Feng-Rao&Feng-Rao&Advisory&Section\\
&WB&WWB&OWB&bound&\ref{secfurther}\\
\hline \\
$d_1$&$7$&$7$&$8$&$9$&$10$\\
$d_2$&$8$&$8$&$10$&$12$&$13$
\end{tabular}
\end{center}
\caption{Estimates on first and second generalized Hamming weight of
  the code $C(16)$ over ${\mathbb{F}}_8$.}
\label{tabny}
\end{table}

We next consider the improved codes $\tilde{C}_{adv}(\delta)$ and
$\tilde{C}_{fim}(\delta)$ (See Remark~\ref{remstart} and
Remark~\ref{remslut} for the definitions). For two designed distances
$\delta=10, 13$, the code $\tilde{C}_{fim}(\delta)$ is of higher
dimension than $\tilde{C}_{adv}(\delta)$. In Table~\ref{tabmad} we
list estimates from the advisory bound on the generalized Hamming
weights of the first code and estimates from the bound of
Section~\ref{secfurther} on the generalized Hamming weights of the
latter code, respectively. We see that for higher generalized Hamming weights there
is a price to be paid for the increase in dimension. 
\begin{table}
\begin{center}
\begin{tabular}{l|rrrrrr}
&$k$&$d_2$&$d_3$&$d_4$&$d_5$&$d_6$\\
\hline \\
$\tilde{C}_{adv}(10)$&$16$&$12$&$14$&$15$&$16$&$20$\\
$\tilde{C}_{fim}(10)$&$17$&$12$&$13$&$14$&$15$&$16$\\
$\tilde{C}_{adv}(13)$&$11$&$16$&$20$&$22$&$24$&$26$\\
$\tilde{C}_{fim}(13)$&$12$&$15$&$16$&$21$&$22$&$24$
\end{tabular}
\end{center}
\caption{Parameters of improved codes over ${\mathbb{F}}_8$. By definition, the codes
$\tilde{C}_{adv}(10)$ and $\tilde{C}_{fim}(10)$ are of designed
minimum distance $10$. Similarly,  $\tilde{C}_{adv}(13)$ and
$\tilde{C}_{fim}(13)$,  are of designed
minimum distance $13$. By $k$ we denote the
dimension. The values of $d_2, \ldots
, d_6$ for $\tilde{C}_{adv}(10)$ and $\tilde{C}_{adv}(13)$ are
estimated using the advisory bound. For $\tilde{C}_{fim}(10)$ and
$\tilde{C}_{fim}(13)$ the method from Section~\ref{secfurther} is used.} 
\label{tabmad}
\end{table}
\subsection{Codes from a curve over ${\mathbb{F}}_{27}$}
Similarly to the curve $F_8(X,Y) \in {\mathbb{F}}_8[X,Y]$ from the
previous section we now consider the curve $F_{27}(X,Y)=G_{27}(X)-H_{27}(Y)\in {\mathbb{F}}_{27}[X,Y]$. Here, $G_{27}(X)$ is
the trace-polynomial $X^9+X^3+X$ and $H_{27}(Y)=Y^{12}+Y^{10}+Y^4$ 
satisfies that when evaluated in elements from ${\mathbb{F}}_{27}$ it
returns values from ${\mathbb{F}}_3$. The arguments of the previous
subsection translate immediately. Only difference is that now instead of
having many pairs of monomials in the footprint being of the same
weight we now have many triples of monomials in the footprint being
of the same weight. The implication is that when applying
Theorem~\ref{thenew} we will often need $v=2$ rather than $v=1$. The codes being of length $n=3^5=243$
we cannot give many details, but restrict to consider the minimum
distance and the second generalized Hamming
weight of the codes $C(s)$. See Figure~\ref{figo2}. Again, all five
bounds come into action.  
\begin{figure}
\begin{center}
\includegraphics[height=86mm]{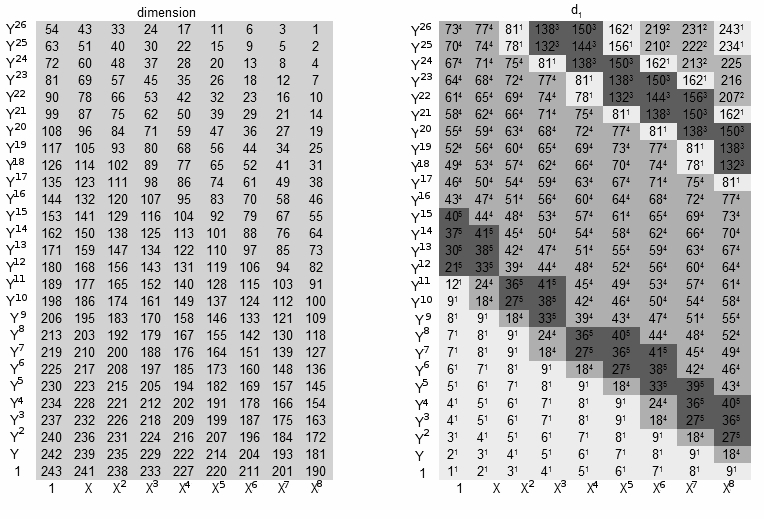}\\
\includegraphics[height=86mm]{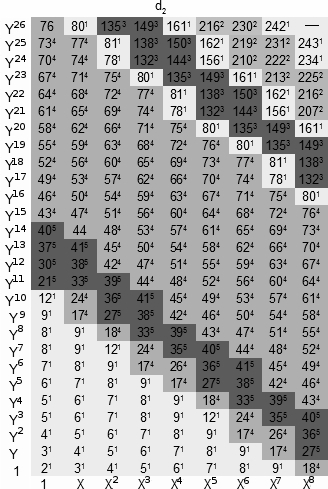}
\end{center}
\caption{Dimensions, minimum distance and second generalized
  Hamming weight of codes $C(s)$ over
  ${\mathbb{F}}_{27}$. Notation as in Figure~\ref{figo1}}
\label{figo2}
\end{figure}
To illustrate how much the advisory bound and the bound
of Section~\ref{secfurther} improve upon the various versions of the
Feng-Rao bound we treat in detail the codes $C(75)$, $C(76)$, $C(83)$
in Table~\ref{taberitop}. These codes are of dimension
$168$, $167$ and $160$. 
\begin{table}
\begin{center}
\begin{tabular}{l|cccccc}
&Feng-Rao&Feng-Rao&Feng-Rao&Advisory&Section\\
&WB&WWB&OWB&bound&\ref{secfurther}\\
\hline \\
$d_1(C(75))$&$15$&$15$&$21$&$29$&$33$\\
$d_2(C(75))$&$16$&$16$&$24$&$34$&$38$\\
 &     \\
$d_1(C(76))$&$15$&$15$&$21$&$33$&$36$\\
$d_2(C(76))$&$16$&$16$&$24$&$38$&$39$\\
& \\
$d_1(C(83))$&$16$&$16$&$24$&$34$&$38$\\
$d_2(C(83))$&$17$&$17$&$27$&$39$&$41$
\end{tabular}
\end{center}
\caption{Estimates of minimum distance and second generalized Hamming
  weight for a selection of codes over ${\mathbb{F}}_{27}$.}
\label{taberitop}
\end{table}

\section{Concluding remarks}
In this paper we treated two improvements to the Feng-Rao bound for
dual codes: the advisory bound and a new bound which is an improvement
to it. The latter bound is closely related to a new bound for primary
codes which we treat in a separate paper. 
%Wheras it is natural to
%explain the bounds of the present paper at the level of general linear
%codes, the new bound for primary codes is different in nature, in that
%for affine variety codes it is an imediate consequence of the
%footprint bound from Gr\"{o}bner basis theory. 
Part of this research
was done while the second listed author was visiting East China
Normal University. We are grateful to Professor Hao Chen for his hospitality. The authors also gratefully acknowledge the support from
the Danish National Research Foundation and the National Science
Foundation of China (Grant No.\ 11061130539) for the Danish-Chinese
Center for Applications of Algebraic Geometry in Coding Theory and
Cryptography.

\bibliography{bibfile}
\bibliographystyle{plain}

\end{document}